\documentclass{article}

\usepackage{arxiv}

\usepackage[utf8]{inputenc} 
\usepackage[T1]{fontenc}    
\usepackage{hyperref}       
\usepackage{url}            
\usepackage{booktabs}       
\usepackage{amsfonts}       
\usepackage{nicefrac}       
\usepackage{microtype}      
\usepackage{lipsum}		
\usepackage{graphicx}
\usepackage{natbib}
\usepackage{doi}
\usepackage{amsmath} 
\usepackage{amsthm}
\usepackage{amssymb}
\usepackage{amsfonts}
\usepackage{enumitem}
\usepackage{float}
\usepackage{tabularx}
\usepackage{subfigure}

\newtheorem{Theorem}{Theorem}
\newtheorem{Definition}{Definition}

\theoremstyle{definition}
\newtheorem{Example}{Example}[section]

\newcolumntype{C}{>{\centering\arraybackslash}X}

\title{A Measure of Synergy Based on Union Information}

\date{} 					

\author{{\hspace{1mm}André F. C. Gomes}\thanks{Corresponding author. } \\
	Instituto de Telecomunicações\\
	Instituto Superior Técnico\\
	Universidade de Lisboa, Portugal \\
	\texttt{andrefcgomes@tecnico.ulisboa.pt} \\
	\And
	{\hspace{1mm}Mário A. T. Figueiredo} \\
	Instituto de Telecomunicações\\
	Instituto Superior Técnico\\
	Universidade de Lisboa, Portugal \\
	\texttt{mario.figueiredo@tecnico.ulisboa.pt} \\
}



\hypersetup{
pdftitle={A Measure of Synergy Based on Union Information},
pdfsubject={},
pdfauthor={André F.C. Gomes, Mário A.T. Figueiredo},
pdfkeywords={information theory, partial information decomposition, channel preorders, intersection information, redundancy},
}

\begin{document}
\maketitle

\begin{abstract}
The \textit{partial information decomposition} (PID) framework is concerned with decomposing the information that a set of (two or more) random variables (the sources) has about another variable (the target) into three types of information: unique, redundant, and synergistic. Classical information theory alone does not provide a unique way to decompose information in this manner and additional assumptions have to be made. One often overlooked way to achieve this decomposition is using a so-called measure of union information---which quantifies the information that is present in at least one of the sources---from which a synergy measure stems. In this paper, we introduce a new measure of union information based on adopting a communication channel perspective, compare it with existing measures, and study some of its properties. We also include a comprehensive critical review of characterizations of union information and synergy measures that have been proposed in the literature.
\end{abstract}

\keywords{information theory \and partial information decomposition \and union information \and synergy \and communication channels}

\section{Introduction}
\citet{williams2010nonnegative} introduced the \textit{{partial information decomposition}} (PID) framework as a way to characterize, or analyze, the information that a set of random variables (often called \textit{{sources}}) has about another variable (referred to as the \textit{{target}}). PID is a useful tool for gathering insights and analyzing the way information is stored, modified, and transmitted within complex systems \citep{lizier2013towards,wibral2017quantifying}. It has been applied in several areas such as cryptography \citep{rauh2017secret} and neuroscience \citep{luppi2019consciousness,varley2023multivariate}, with many other potential use cases, such as in studying information flows in gene regulatory networks \citep{chan2017gene}, neural coding \citep{faber2019computation}, financial markets \citep{james2018modes}, and network design \citep{ehrlich2023measure, tokui2021disentanglement}.

Consider the simplest case, a three-variable joint distribution $p(y_1, y_2, t)$ describing three random variables: two so-called sources, $Y_1$ and $Y_2$, and a target $T$. Notice that, despite what the names \textit{sources} and \textit{target} might suggest, there is no directionality (causal or otherwise) assumption. The goal of PID is to \textit{decompose} the information that the sources $Y=(Y_1, Y_2)$ have about $T$ into the sum of four non-negative quantities: the information that is present in both $Y_1$ and $Y_2$, known as \textit{redundant} information, $R$; the information that only $Y_1$ (respectively $Y_2$) has about $T$, known as \textit{unique} information, $U_1$ (respectively $U_2$); and the \textit{synergistic} information, $S$, that is present in the pair $(Y_1, Y_2)$ but not in $Y_1$ or $Y_2$ alone. In this case with two variables, the goal is, thus, to write
\begin{equation} \label{decomposition}
I(T;Y) = R + U_1 + U_2 + S,
\end{equation}
where $I(T;Y)$ is the mutual information between $T$ and $Y$ \citep{cover1999elements}. The redundant information $R$, because it is present in both $Y_1$ and $Y_2$, is also referred to as \textit{intersection} information and denoted as $I_\cap$. Finally, $I_\cup$ refers to \textit{union} information, i.e., the amount of information provided by at least one of the sources; in the case of two sources, $I_{\cup} = U_1 + U_2 + R$, thus $S = I(T;Y)- I_\cup$. 

Because unique information and redundancy satisfy the relationship  $U_i = I(T; Y_i) - R$ (for $i \in\{1,2\}$), it turns out that defining how to compute one of these quantities ($R$, $U_i$, or $S$) is enough to fully determine the others \citep{williams2010nonnegative}. \citet{williams2010nonnegative} suggested a set of axioms that a measure of redundancy should satisfy, and proposed a measure of their own. Those axioms became well known as the Williams--Beer axioms, although the measure they proposed has subsequently been criticized for not capturing informational content, but only information \textit{size} \citep{harder2013bivariate}. It is worth noting that as the number of variables grows the number of terms appearing in the PID of $I(T;Y)$ grows super-exponentially \citep{gutknecht2021bits}.

Stimulated by that initial work, other measures of information and other sets of axioms for information decomposition have been introduced; see, for example, the work by \citet{bertschinger2014quantifying}, \citet{griffith2014quantifying}, and \citet{james2018unique}, for different measures of redundant, unique, and synergistic information. There is no consensus about what axioms any measure should satisfy or whether a given measure \textit{captures the information} that it should capture, except for the Williams--Beer axioms. Today, there is still debate about what axioms different measures of information should satisfy, and there is no general agreement on what is an appropriate PID \citep{chicharro2017synergy, james2018unique, bertschinger2013shared, rauh2017coarse, ince2017measuring}.

Most PID measures that have been suggested thus far are either measures of redundant information, e.g., \citep{williams2010nonnegative, harder2013bivariate, kolchinsky2022novel, barrett2015exploration, ince2017measuring, griffith2014intersection, griffith2015quantifying, gomes2023orders}, or measures of unique information, e.g., \citep{bertschinger2014quantifying, james2018unique}. Alternatively, it is possible to define the \textit{union information} of a set of sources as the amount of information provided by at least one of those sources. Synergy is then defined as the difference between the total information and union information \citep{kolchinsky2022novel}.

In this paper, we introduce a new measure of union information based on the information channel perspective that we already pursued in earlier work \citep{gomes2023orders} and study some of its properties. The resulting measure leads to a novel information decomposition that is particularly suited for analyzing how information is distributed in channels. 

The rest of the paper is organized as follows. A final subsection of this section introduces the notation used throughout the paper. In Section \ref{sec2}, we recall some properties of PID and take a look at how the degradation measure for redundant information introduced by \citet{kolchinsky2022novel} decomposes information in bivariate systems, while also pointing out some drawbacks of that measure. Section \ref{sec3} presents the motivation for our proposed measure, its operational interpretation, its multivariate definition, as well as some of its drawbacks. In Section \ref{sec4}, we propose an extension of the Williams--Beer axioms for measures of union information and show that our proposed measure satisfies those axioms. We review all properties that have been proposed both for measures of union information and synergy, and either accept or reject them. We also compare different measures of synergy and relate them, whenever possible. Finally, Section \ref{sec:prior} presents concluding remarks and suggestions for future work.

\subsection*{Notation}\label{notation}
For two discrete random variables $X \in \mathcal{X}$ and $Z\in \mathcal{Z}$, their Shannon mutual information $I(X;Z)$ is given by $I(X;Z) = I(Z;X) = H(X)-H(X|Z) = H(Z) - H(Z|X)$, where $H(X) = -\sum_{x\in \mathcal{X}} p(x)\log p(x)$ and $H(X|Z) = -\sum_{x\in \mathcal{X}} \sum_{z\in \mathcal{X}} p(x,z)\log p(x|z)$ are the entropy and conditional entropy, respectively \cite{cover1999elements}. The conditional distribution $p(z|x)$ corresponds, in an information-theoretical perspective, to a discrete memoryless channel with a channel matrix $K$, i.e., such that $K[x,z] = p(z|x)$ \citep{cover1999elements}. This matrix is row-stochastic: $K[x,z] \geq 0$, for any  $x \in \mathcal{X}$ and $z\in \mathcal{Z}$, and $\sum_{z \in \mathcal{Z}} K[x,z]=1$, for any $x$. 

Given a set of $n$ discrete random variables (sources), $Y_1 \in \mathcal{Y}_1, {\ldots}, Y_n\in\mathcal{Y}_n$, and a discrete random variable $T \in \mathcal{T}$ (target) with joint distribution (probability mass function) $p(y_1, {\ldots}, y_n, t)$, we consider the channels $K^{(i)}$ between $T$ and each $Y_i$, that is, each $K^{(i)}$ is a $|\mathcal{T}| \times |\mathcal{Y}_i|$ row-stochastic matrix with the conditional distribution $p(y_i|t)$. For a vector $y$, $y_{-i}$ refers to the same vector without the $i^{th}$ component.

We say that three random variables, say $X,Y,Z$, form a Markov chain (which we denote by $X - Y - Z$ or by $X\perp Z \mid Y$) if $X$ and $Z$ are conditionally independent, given $Y$.

\section{Background\label{sec2}}

\subsection{PID Based of Channel Orders}
In its current versions, PID is agnostic to causality in the sense that, like mutual information, it is an undirected measure, i.e., $I(T;Y) = I(Y;T)$. Some measures indirectly presuppose some kind of directionality to perform PID. Take, for instance, the redundancy measure introduced by \citet{kolchinsky2022novel}, based on the so-called degradation order $\preceq_d$ between communication channels (see recent work by \citet{kolchinsky2022novel} and \citet{gomes2023orders} for definitions):
\begin{align}\label{d}
I_\cap^\text{d} (Y_1, Y_2, {\ldots}, Y_m \rightarrow T) := \sup_{K^Q:\; K^Q \preceq_d K^{(i)}, \; i\in\{1,..,m\}} I(Q;T). 
\end{align}
In the above equation, $Q$ is the output of the channel $T \rightarrow Q$, which we also denote as $K^Q$. When computing the information shared by the $m$ sources, $I_\cap^\text{d}(Y_1, Y_2, {\ldots}, Y_m \rightarrow T)$, the perspective is that there is a channel with a single input $T$ and $m$ outputs $Y_1, {\ldots}, Y_m$. This definition of $I_\cap^\text{d}$ corresponds to the mutual information of the most informative channel, under the constraint that this channel is dominated (in the degradation order $\preceq_d$ sense) by all channels $K^{(1)}, {\ldots}, K^{(m)}$. Since mutual information was originally introduced to formalize the capacity of communication channels, it is not surprising that measures that presuppose channel directionality are found useful in this context. For instance, the work of \mbox{\citet{james2018uniquesecret}} concluded that only if one assumes the directionality $T \rightarrow Y_i$ does one obtain a valid PID from a secret key agreement rate, which supports the approach of assuming this directionality perspective.

Although it is not guaranteed that the structure of the joint distribution $p(y_1, {\ldots}, y_n, t)$ is compatible with the causal model of a single input and multiple output channels 
(which implies that the sources are conditionally independent, given $T$), one may always compute such measures, which have interesting and relevant operational interpretations. In the context of PID, where the goal is to study how information is decomposed, such measures provide an excellent starting point. Although it is not guaranteed that there is actually a channel (or a direction) from $T$ to $Y_i$, we can characterize how information about $T$ is spread through the sources. In the case of the degradation order, $I_\cap^\text{d}$ provides insight about the maximum information obtained about $T$ if any $Y_i$ is observed.

Arguably, the most common scenario in PID is finding out something about the structure of the information the variables $Y_1, {\ldots}, Y_n$ have about $T$. In a particular system of variables characterized by its joint distribution, we do not make causal assumptions, so we can adopt the perspective that the variables $Y_i$ are functions of $T$, hence obtaining the channel structure. Although this channel structure may not be {\it faithful} \citep{pearl2009causality} to the conditional independence properties implied by $p(y_1, {\ldots}, y_n, t)$, this channel perspective allows for decomposition of $I(Y;T)$ and for drawing conclusions about the inner structure of the information that $Y$ has about $T$. Some distributions, however, \textit{cannot} have this causal structure. Take, for instance, the distribution generated by $T = Y_1 \text{ xor } Y_2$, where $Y_1$ and $Y_2$ are two equiprobable and independent binary random variables. We will call this distribution XOR. For this well-known distribution, we have $Y_1 \perp Y_2$ and $Y_1 \not\perp Y_2 | T$, whereas the implied channel distribution that $I_\cap^\text{d}$ assumes yields the exact opposite dependencies, that is, $Y_1 \not\perp Y_2$ and $Y_1 \perp Y_2 | T$; see Figure \ref{causal} for more insight.

Consider the computation of $I_\cap^\text{d}(Y_1, Y_2 \rightarrow T)$ for the XOR distribution. This measure argues that, since
\[
  K^{(1)} = K^{(2)} = \left[\begin{array}{cc}
    0.5 & 0.5 \\
    0.5 & 0.5 \\
  \end{array}\right],
\]
then a solution to $I_\cap^\text{d}(Y_1, Y_2 \rightarrow T)$ is given by the channel $K^Q = K^{(1)}$ and redundancy is computed as $I(Q;T)$, yielding 0 bits of redundancy, and consequently 1 bit of synergy (as computed from \eqref{decomposition}). Under this channel perspective (as in Figure \ref{causal}b), $I_\cap^\text{d}$ is not concerned with, for example, $p(Y_1 | T, Y_2)$ or $p(Y_1, Y_2)$. If all that is needed to compute redundancy is $p(T), K^{(1)}$, and $K^{(2)}$, this would lead to the wrong conclusion that the outcome $(Y_1, Y_2, T) = (0, 0, 1)$ has non-null probability, which it does not. With this, we do not mean that $I_\cap^\text{d}$ is an incomplete or incorrect measure to perform PID, we are using its insights to point us in a different direction.

\begin{figure}[hbt]
\begin{center}
    \subfigure[]{\includegraphics[width=0.3\textwidth]{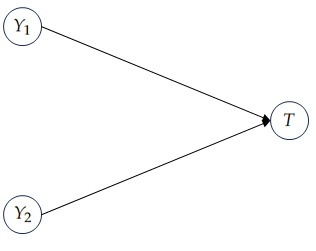}} 
    \subfigure[]{\includegraphics[width=0.3\textwidth]{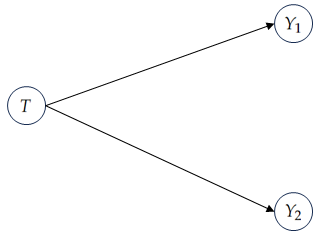}} 
    \subfigure[]{\includegraphics[width=0.3\textwidth]{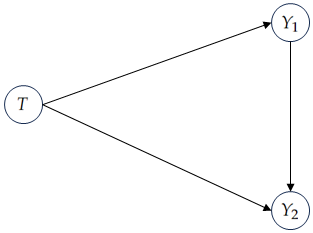}}
    \caption{(\textbf{a}) Assuming faithfulness \citep{pearl2009causality}, this is the only three-variable directed acyclic graph (DAG) that satisfies $Y_1 \perp Y_2$ and $Y_1 \not\perp Y_2 | T$, in general \citep{pearl2009causality}. (\textbf{b}) The DAG that is ``implied'' by the perspective of $I_\cap^\text{d}$. (\textbf{c}) A DAG that can generate the XOR distribution, but does not satisfy the dependencies implied by $T = Y_1 \text{ xor } Y_2$. In fact, any DAG that is in the same Markov equivalence class as (\textbf{c}) can generate the XOR distribution (or any other joint distribution), but none satisfy the earlier dependencies, assuming faithfulness.}
    \label{causal}
\end{center}
\end{figure}

\subsection{Motivation for a New PID}
At this point, the most often used approaches to PID are based on redundancy measures. Usually, these are in one of the two following classes:
\begin{itemize}
    \item Measures that are not concerned with information content, only \textit{information size}, which makes them easy to compute even for distributions with many variables, but at the cost that the resulting decompositions may not give realistic insights into the system, precisely because they are not sensitive to informational content. Examples are $I_\cap^\text{WB}$ \citep{williams2010nonnegative} or $I_\cap^\text{MMI}$ \citep{barrett2015exploration}, and applications of these can be found in \citep{colenbier2020disambiguating, sherrill2021partial, sherrill2020correlated, proca2022synergistic}.
    \item Measures that satisfy the Blackwell property---which arguably do measure information content---but are insensitive to changes in the sources' distributions $p(y_1, {\ldots}, y_n) = \sum_t p(y_1,{\ldots},y_n,t)$ (as long as $p(T), K^{(1)}, {\ldots}, K^{(n)}$ remain the same). Examples are $I_\cap^\text{d}$ \citep{kolchinsky2022novel} (see Equation \eqref{d}) or $I_\cap^\text{BROJA}$ \citep{bertschinger2014quantifying}. It should be noted that $I_\cap^\text{BROJA}$ is only defined for the bivariate case, that is, for distributions with at most two sources, described by $p(y_1, y_2, t)$. Applications of these can be found in \citep{kay2022comparison, liang2023quantifying, hamman2023demystifying}.
\end{itemize}

Particularly, $I_\cap^\text{d}$ and $I_\cap^\text{BROJA}$ satisfy the so-called (*) assumption \citep{bertschinger2014quantifying}, which argues that redundant and unique information should only depend on the marginal distribution of the target $p(T)$ and on the conditional distributions of the sources given the target, that is, on the stochastic matrices $K^{(i)}$. \citet{james2018unique} (Section 4) 
and \citet{ince2017measuring} (Section 5) provide great arguments as to why the (*) assumption should not hold in general, and we agree with them. 

Towards motivating a new PID, let us look at how $I_\cap^\text{d}$ decomposes information in the bivariate case. Any measure that is based on a preorder between channels and which satisfies Kolchinsky's axioms yields similar decompositions \citep{gomes2023orders}, thus there is no loss of generality in focusing on $I_\cap^\text{d}$. We next analyze three different cases.

\begin{itemize}
    \item Case 1: There is an ordering between the channels, that is, w.l.o.g., $K^{(2)} \preceq_d K^{(1)}$. This means that $I(Y_2;T) \leq I(Y_1;T)$ and the decomposition (as in \eqref{decomposition}) is given by $R = I(Y_2;T)$, $U_2 = 0$, $U_1 = I(Y_1;T) - I(Y_2;T)$, and $S=I(Y;T)-I(Y_1;T)$. Moreover, if $K^{(1)} \not\preceq_d K^{(2)}$, then $S=0$.

As an example, consider the leftmost distribution in Table \ref{distributions}, which satisfies $T=Y_1$.  In this case, 
\[
  K^{(1)} = \left[\begin{array}{cc}
    1 & 0 \\
    0 & 1 \\
  \end{array}\right] \succeq_d
  K^{(2)} = \left[\begin{array}{cc}
    0.5 & 0.5 \\
    0.5 & 0.5 \\
  \end{array}\right],
\]
yielding $R=0$, $U_1=1$, $U_2=0$, and $S=0$, as expected, because $T=Y_1$.
    
    \begin{table}[H]
\caption{Three joint distributions $p(yt,y_2,y_2)$ used to exemplify the three cases. Left: joint distribution satisfying $T = Y_1$. Middle: distribution satisfying $T = (Y_1, Y_2)$, known as the COPY distribution. Right: the so-called BOOM distribution (see text).}\label{distributions}
		\newcolumntype{C}{>{\centering\arraybackslash}X}
		\begin{tabularx}{\textwidth}{CCCc|CCCc|CCCc}
    
    \toprule
    \boldmath$t$ & \boldmath$y_1$ & \boldmath$y_2$ & \boldmath$p(t,y_1,y_2)$ & \boldmath$t$ & \boldmath$y_1$ & \boldmath$y_2$ & \boldmath$p(t,y_1,y_2)$ &\boldmath$t$ & \boldmath$y_1$ & \boldmath$y_2$ & \boldmath$p(t,y_1,y_2)$\\
    \midrule
    0 & 0 & 0 & 1/4 & (0,0) & 0 & 0 & 1/4&0 & 0 & 2 & 1/6\\
    \midrule
    0 & 0 & 1 & 1/4 & (0,1) & 0 & 1 & 1/4& 1 & 0 & 0 & 1/6\\
    \midrule
    1 & 1 & 0 & 1/4 & (1,0) & 1 & 0 & 1/4& 1 & 1 & 2 & 1/6\\
    \midrule
    1 & 1 & 1 & 1/4 & (1,1) & 1 & 1 & 1/4&2 & 0 & 0 & 1/6\\
    \midrule
     &  &  &  &  &  &  & &2 & 2 & 0 & 1/6 \\
    \midrule
    &  &  &  &  &  &  & &2 & 2 & 1 & 1/6\\
    \bottomrule
\end{tabularx}
\end{table}

\end{itemize}

\begin{itemize}

\item Case 2: There is no ordering between the channels and the solution of $I_\cap^\text{d}(Y_1, Y_2 \rightarrow T)$ is a trivial channel, in the sense that it has no information about $T$. The decomposition is given by $R = 0$, $U_2 = I(Y_2;T)$, $U_1 = I(Y_1;T)$, and $S = I(Y;T)-I(Y_1;T)-I(Y_2;T)$, which may lead to a negative value of synergy. An example of this is provided later.

As an example, consider the COPY distribution with $Y_1$ and $Y_2$  i.i.d. Bernoulli variables with parameter 0.5, shown in the center of Table \ref{distributions}. In this case, channels $K^{(1)}$ and $K^{(2)}$ have the forms
\[
  K^{(1)} = \left[\begin{array}{cc}
    1 & 0 \\
    1 & 0 \\
    0 & 1 \\
    0 & 1
  \end{array}\right], \quad
  K^{(2)} = \left[\begin{array}{cc}
    1 & 0 \\
    0 & 1 \\
    1 & 0 \\
    0 & 1
  \end{array}\right],
\]
with no degradation order between them. This yields $R=0$, $U_1 = U_2 = 1$, and $S=0$.

\item Case 3: There is no ordering between the channels and $I_\cap^\text{d}(Y_1, Y_2 \rightarrow T)$ is achieved by a nontrivial channel $K^{Q}$. The decomposition is given by $R=I(Q;T)$, $U_2 = I(Y_2;T)-I(Q;T)$, $U_1 = I(Y_1;T)-I(Q;T)$, and $S=I(Y;T)+I(Q;T)-I(Y_1;T)-I(Y_2;T)$. 

As an example, consider the BOOM distribution \citep{james2018unique}, shown on the right-hand side of Table~\ref{distributions}. In this case, channels $K^{(1)}$ and $K^{(2)}$ are 
\[
  K^{(1)} = \left[\begin{array}{ccc}
    1 & 0 & 0 \\
    1/2 & 1/2 & 0 \\
    1/3 & 0 & 2/3
  \end{array}\right], \quad
  K^{(2)} = \left[\begin{array}{ccc}
    0 & 0 & 1 \\
    1/2 & 0 & 1/2 \\
    2/3 & 1/3 & 0
  \end{array}\right],
\]
and there is no degradation order between them. However, there is a nontrivial channel $K^Q$ that is dominated by both $K^{(1)}$ and $K^{(2)}$ that maximizes $I(Q;T)$. One of its versions is 
\[
  K^Q = \left[\begin{array}{ccc}
    0 & 1 & 0 \\
    0 & 3/4 & 1/4 \\
    1/3 & 1/3 & 1/3
  \end{array} \right] ,
\]
yielding $R \approx 0.322$, $U_1 = U_2 \approx 0.345$, and $S \approx 0.114$.

\end{itemize}

This class of approaches has some limitations, as is the case for all PID measures. In the bivariate case, the definition of synergy $S$ from a measure of redundant information is the completing term such that $I(Y;T) = S  + R + U_1  + U_2$ holds. If there is no $\preceq_d$ ordering between the channels, as in the COPY distribution (\mbox{Table \ref{distributions}}, middle), the situation is more complicated. We saw that the decomposition for this distribution yields $R=0$, $U_1 = U_2 = 1$, and $S=0$. However, suppose we change the distribution such that $T=(1,1)$ has probability 0 and the other outcomes have probability $1/3$. For example, consider the distribution in Table \ref{tweakedcopy}. For this distribution, we have $I(Y;T) \approx 1.585$. Intuitively, we would expect that $I(Y;T)$ would be decomposed as $R=0$, $U_1=U_2=I(Y;T)/2$, and $S=0$, just as before, so that the proportions $U_i/I(Y;T)$, for $i \in \{1,2\}$, in both distributions remain the same, whereas redundancy and synergy would remain zero. That is, we do not expect that removing one of the outcomes while maintaining the remaining outcomes equiprobable would change the types of information in the system. However, if we perform this and compute the decomposition yielded by $I_\cap^\text{d}$, we obtain $R=0$, $U_1=U_2=0.918 \neq I(Y;T)/2$, and $S=-0.251$,  i.e., a negative synergy, arguably meaningless.

    \begin{table}[H]
\caption{{Tweaked} COPY distribution, now without the outcome $(T, Y_1, Y_2) = ((1,1),1,1)$.}\label{tweakedcopy}
		\newcolumntype{C}{>{\centering\arraybackslash}X}
\begin{center}
  \begin{tabularx}{0.5\textwidth}{CCCC}
    \toprule
    \textbf{T} & \boldmath$Y_1$ & \boldmath$Y_2$ & \boldmath$p(t,y_1,y_2)$ \\
    \midrule
    (0,0) & 0 & 0 & 1/3 \\
    \midrule
    (0,1) & 0 & 1 & 1/3 \\
    \midrule
    (1,0) & 1 & 0 & 1/3 \\
    \bottomrule
\end{tabularx}
\end{center}
\end{table}
\noindent 

There are still many open questions in PID. One of those questions is: Should measures of redundant information be used to measure synergy, given that they compute it as the completing term in Equation \eqref{decomposition}. We agree that using a measure of redundant information to compute the synergy in this way may not be appropriate, especially because the \textit{inclusion--exclusion principle} (IEP) should not necessarily hold in the context of PID; see \citep{kolchinsky2022novel} for comments on the IEP.

With these motivations, we propose a measure of union information for PID that shares with $I_\cap^\text{d}$ the implicit view of channels. However, unlike $I_\cap^\text{d}$ and $I_\cap^\text{BROJA}$---which satisfy the (*) assumption, and thus, are not concerned with the conditional dependencies in $p(y_i|t, y_{-i})$---our measure defines synergy as the information that cannot be computed from $p(y_i|t)$, but can be computed from $p(y_i|t, y_{-i})$. That is, we propose that synergy be computed as the information that is not captured by assuming conditional independence of the sources, given the target.

\section{A New Measure of Union Information\label{sec3}}

\subsection{Motivation and Bivariate Definition}

Consider a distribution $p(y_1, y_2, t)$ and suppose there are two agents, agent 1 and agent 2, whose goal is to reduce their uncertainty about $T$ by observing $Y_1$ and $Y_2$, respectively. Suppose also that the agents know $p(t)$, and that agent $i$ has access to its channel distribution $p(y_i|t)$. Many PID measures make this same assumption, including $I_\cap^\text{d}$. When agent $i$ works alone to reduce the uncertainty about $T$, since it has access to $p(t)$ and $p(y_i|t)$, it also knows $p(y_i)$ and $p(y_i, t)$, which allows it to compute $I(Y_i;T)$: the amount of uncertainty reduction about $T$ achieved by observing $Y_i$.

Now, if the agents can work together, that is, if they have access to $Y=(Y_1, Y_2)$, then they can compute $I(Y;T)$, because they have access to $p(y_1, y_2 | t)$ and $p(t)$. On the other hand, if the agents are not able to work together (in the sense that they are not able to observe $Y$ together, but only $Y_1$ and $Y_2$, separately) yet can communicate, then they can construct a different distribution $q$ given by $q(y_1, y_2, t) := p(t)p(y_1|t)p(y_2|t)$, i.e., a distribution under which $Y_1$ and $Y_2$ are conditionally independent given $T$, but have the same marginal $p(t)$ and the same individual conditionals $p(y_1|t)$ and $p(y_2|t)$.

The form of $q$ in the previous paragraph should be contrasted with the following factorization of $p$, which entails no conditional independence assumption: $p(y_1, y_2, t) = p(t)p(y_1|t)p(y_2|t, y_1)$. In this sense, we would propose to define union information, for the bivariate case, as follows:
\begin{align} \label{bivariateq}
&I_\cup(Y_1 \rightarrow T) = I_q(Y_1;T) = I_p(Y_1;T), \nonumber \\ 
&I_\cup(Y_2 \rightarrow T) = I_q(Y_2;T) = I_p(Y_2;T), \nonumber \\
&I_\cup(Y_1, Y_2 \rightarrow T) = I_q(Y;T), \\
&I_\cup((Y_1,Y_2) \rightarrow T) = I_p(Y;T), \nonumber
\end{align}
where the subscript refers to the distribution under which the mutual information is computed. From this point forward, the absence of a subscript means that the computation is performed under the true distribution $p$. As we will see, this is not yet the final definition, for reasons to be addressed below.  

Using the definition of synergy derived from a measure of union information \cite{kolchinsky2022novel}, for the bivariate case we have
\begin{align} \label{synergy}
S(Y_1, Y_2 \rightarrow T) := I(Y;T) - I_\cup(Y_1, Y_2 \rightarrow T).
\end{align}
Synergy is often posited as \emph{the difference between the whole and the union of the parts}. For our measure of union information, the `union of the parts' corresponds to the reduction in uncertainty about $T$---under $q$---that agents 1 and 2 can obtain by sharing their conditional distributions. Interestingly, there are cases where the union of the parts is better than the whole, in the sense that $I_\cup(Y_1, Y_2 \rightarrow T) > I(Y;T)$. An example of this is given by the \textit{Adapted ReducedOR} distribution, originally introduced by \citet{ince2017measuring} and adapted by \citet{james2018unique}, which is shown in the left-hand side of Table \ref{AdaptedReducedOR}, where $r \in [0,1]$. This distribution is such that $I_q(Y;T)$ does not depend on $r$ ($I_q(Y;T)  \approx 0.549$), since neither $p(t)$ nor $p(y_1|t)$ and $p(y_2|t)$ depend on $r$; consequently, $q(t,y_1,y_2)$ also does not depend on $r$, as shown in the right-hand side of Table \ref{AdaptedReducedOR}.

\begin{table}[H] 
\caption{{Left}: The \textit{Adapted ReducedOR} distribution, where $r\in[0,1]$. Right: The corresponding distribution $q(t,y_1,y_2) = p(t) p(y_1|t) p(y_2|t)$.} \label{AdaptedReducedOR}
\newcolumntype{C}{>{\centering\arraybackslash}X}
\begin{center}
\begin{tabularx}{0.9\textwidth}{CCCc|CCCc}
\toprule
    \textbf{t} & \boldmath$y_1$ & \boldmath$y_2$ & \boldmath$p(t,y_1,y_2)$ & \textbf{t} & \boldmath$y_1$ & $y_2$ & \boldmath$q(t,y_1,y_2)$\\
    \midrule
    0 & 0 & 0 & 1/2 &0 & 0 & 0 & 1/2\\
    \midrule
    1 & 0 & 0 & r/4 &1 & 0 & 0 & 1/8\\
    \midrule
    1 & 1 & 0 & (1 $-$ r)/4& 1 & 1 & 0 & 1/8\\
    \midrule
    1 & 0 & 1 & (1 $-$ r)/4& 1 & 0 & 1 & 1/8\\
    \midrule
    1 & 1 & 1 & r/4&  1 & 1 & 1 & 1/8\\
    \toprule
\end{tabularx}
\end{center}
\end{table}
 
It can be easily shown that if $r>0.5$, then $I_q(Y;T) > I(Y;T)$, which implies that synergy, if defined as in \eqref{synergy}, could be negative. How do we interpret the fact that there exist distributions such that $I_q(Y;T) > I(Y;T)$? This means that under distribution $q$, which assumes $Y_1$ and $Y_2$ are conditionally independent given $T$, $Y_1$ and $Y_2$ reduce the uncertainty about $T$ more than in the original distribution. Arguably, the parts working independently and achieving better results than the whole should mean there is no synergy, as opposed to negative synergy. 

The observations in the previous paragraphs motivate our definition of a new measure of union information as
\begin{align} \label{bivariate2}
    I_\cup^\text{CI}(Y_1, Y_2 \rightarrow T) := \min\{I(Y;T), I_q(Y;T) \},
\end{align}
with the superscript CI standing for \textit{conditional independence}, yielding a non-negative synergy: 
\begin{align} \label{bivariate3}
    S^\text{CI}(Y_1, Y_2 \rightarrow T) = I (Y;T) -  I_\cup^\text{CI}(Y_1, Y_2 \rightarrow T) = \max\{ 0, I(Y;T) - I_q(Y;T) \}.
\end{align}
Note that for the bivariate case we have zero synergy if $p(t,y_2,y_2)$ is such that $Y_1 \perp_p Y_2 | T$, that is, if the outputs are indeed conditionally independent given $T$. Moreover, $I_\cup^\text{CI}$ satisfies the monotonicity axiom from the extension of the Williams--Beer axioms to measures of union information (to be mentioned in Section \ref{sec:WB}), which further supports this definition. For the bivariate source case, the decomposition of $I(Y;T)$ derived from a measure of union information is given by

\vspace{-8pt}
\[
\begin{cases}
    I_\cup(Y_1) = I(Y_1;T) = U_1 + R \\
    I_\cup(Y_2) = I(Y_2;T) = U_2 + R \\
    I_\cup(Y_1, Y_2) = U_1 + U_2 + R \\
    I_\cup(Y_{12}) = I(Y_{12};T) = U_1 + U_2 + R + S
\end{cases} \iff
\begin{cases}
    S = I(Y;T) - I_\cup(Y_1, Y_2) \\
    U_1 = I_\cup(Y_1, Y_2) - I(Y_2;T) \\
    U_2 = I_\cup(Y_1, Y_2) - I(Y_1;T) \\
    R = I(Y_1;T) - U_1 = I(Y_2;T) - U_2
\end{cases}.
\]

\subsection{Operational Interpretation}
For the bivariate case, if $Y_1$ and $Y_2$ are conditionally independent given $T$ (Figure~\ref{causal}b), then $p(y_1|t)$ and $p(y_2|t)$ (and $p(t)$) suffice to reconstruct the original joint distribution $p(y_1,y_2,t)$, which means the union of the parts is enough to reconstruct the whole, i.e., there is no synergy between $Y_1$ and $Y_2$. Conversely, a distribution generated by the DAG in Figure~\ref{causal}c does not satisfy conditional independence (given $T$), hence we expect positive synergy, as is the case for the XOR distribution, and indeed our measure yields 1 bit of synergy for this distribution. These two cases motivate the operational interpretation of our measure of synergy: it is the amount of information that is not captured by assuming conditional independence of the sources (given the target).

Recall, however, that some distributions are such that $I_q(Y;T) > I_p(Y;T)$, i.e., such that the union of the parts `outperforms' the whole. What does this mean? It means that under $q$, $Y_1$ and $Y_2$ have more information about $T$ than under $p$: the constructed distribution $q$, which drops the conditional dependence of $Y_1$ and $Y_2$ given $T$, reduces the uncertainty that $Y$ has about $T$ more than the original distribution $p$. In some cases, this may happen because the support of $q$ is larger than that of $p$, which may lead to a reduction in uncertainty under $q$ that cannot be achieved under $p$. In these cases, since we are decomposing $I_p(Y;T)$, we revert to saying that the union information that a set of variables has about $T$ is equal to $I_p(Y;T)$, so that our measure satisfies the monotonicity axiom (later introduced in Definition \ref{WB}). We will comment on this compromise between satisfying the monotonicity axiom and ignoring dependencies later.

\subsection{General (Multivariate) Definition}
\label{md}

To extend the proposed measure to an arbitrary number $n\geq 2$ of sources, we briefly recall the synergy lattice \citep{chicharro2017synergy,  gutknecht2023babel} and the union information semi-lattice \citep{gutknecht2023babel}. For $n=3$, these two lattices are shown in Figure \ref{latices}. For the sake of brevity, we will not address the construction of the lattices or the different orders between sources. We refer the reader to the work of \citet{gutknecht2023babel} for an excellent overview of the different lattices, the orders between sources, and the construction of different PID measures.

\begin{figure}[H]
\begin{center}
    \subfigure[]{\includegraphics[width=0.3\textwidth]{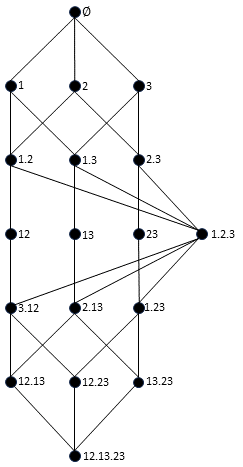}} 
    \subfigure[]{\includegraphics[width=0.3\textwidth]{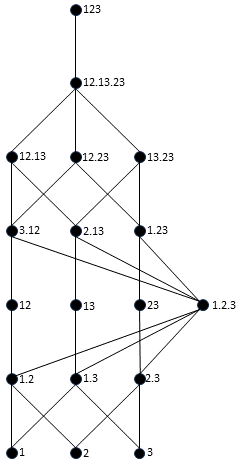}}
    \caption{Trivariate distribution lattices and their respective ordering of sources. Left (\textbf{a}): synergy \mbox{lattice \citep{chicharro2017synergy}}. Right (\textbf{b}): union information semi-lattice \citep{gutknecht2023babel}.}
    \label{latices}
\end{center}
\end{figure}

In the following, we use the term \textit{source} to mean a subset of the variables $\{Y_1, {\ldots}, Y_n\}$, or a set of such subsets, we drop the curly brackets for clarity and refer to the different variables by their indices, as is common in most works on PID. The decomposition resulting from a measure of union information is not as direct to obtain as one obtained from a measure of redundant information, as the solution for the information atoms is not a M\"{o}bius inversion \citep{gutknecht2021bits}. One must first construct the measure of synergy for source $\alpha$ \mbox{by writing }
\begin{align} \label{synergy_general_case}
S^\text{CI}(\alpha \rightarrow T) = I(Y;T) - I_\cup^\text{CI}(\alpha \rightarrow T),
\end{align}
which is the generalization of \eqref{bivariate3} for an arbitrary \textit{source} $\alpha$.  
In the remainder of this paper, we will often omit ``$\rightarrow T$'' from the notation (unless it is explicitly needed), with the understanding that the target variable is always referred to as $T$. Also for simplicity, in the following, we identify the different agents that have access to different distributions as the distributions they have access to.

It is fairly simple to extend the proposed measure to an arbitrary number of sources, as illustrated in the following two examples. 

\begin{Example}
{ To compute $I_\cup^\text{CI}\big(\left(Y_1,Y_2\right), Y_3\big)$,  agent $(Y_1, Y_2)$ knows $p(y_1, y_2|t)$, thus it can also compute, by marginalization, $p(y_1|t)$ and $p(y_2|t)$. On the other hand, agent $Y_3$ only knows $p(y_3|t)$. Recall that both agents also have access to $p(t)$. By sharing their conditionals, the agents can compute $q_1(y_1, y_2, y_3, t) := p(t)p(y_1, y_2|t)p(y_3|t)$, and also $q_2(y_1, y_2, y_3, t) := p(t)p(y_1|t)p(y_2|t)p(y_3|t)$. After this, they may choose whichever distribution has the highest information about $T$, while still holding the view that any information gain larger than $I(Y;T)$ must be disregarded. Consequently, we write
$$
I_\cup^\text{CI}\big((Y_1, Y_2), Y_3 \big) = \min \Big\{I(Y;T), \max \big\{ I_{q_1}(Y;T), I_{q_2}(Y;T)\big\} \Big\}.
$$}
\end{Example}

\begin{Example}
{Slightly more complicated is the computation of $I^\text{CI}_\cup\big((Y_1, Y_2), (Y_1, Y_3), (Y_2, Y_3)\big)$. In this case, the three agents may compute four different distributions, two of which are the same $q_1$ and $q_2$ defined in the previous paragraph, and the other two are $q_3(y_1, y_2, y_3, t) := p(t)p(y_1,y_3|t)p(y_2|t)$, and $q_4(y_1, y_2, y_3, t) := p(t)p(y_1|t)p(y_2,y_3|t)$. }
\end{Example}

\vspace{0.3cm}
Given these insights, we propose the following measure of union information.
\begin{Definition}\label{def_I_cup}
Let $A_1, {\ldots}, A_m$ be an arbitrary collection of $m \geq 1$ sources (recall sources may be subsets of variables). Assume that no source is a subset of another source and no source is a deterministic function of other sources (if there is, remove it). We define
$$
I^\text{CI}_\cup(A_1, {\ldots}, A_m \rightarrow T) = \min \left\{I(A;T), \max_{q \in \mathcal{Q}}  I_q(A;T) \right\},$$
where $A = \bigcup\limits_{i=1}^m A_i$ and $Q$ is the set of all different distributions that the $m$ agents can construct by combining their conditional distributions and marginalizations thereof.
\end{Definition}

For instance, in Example 1 above, $A = \{Y_1, Y_2\} \cup \{Y_3\} = 
\{Y_1, Y_2, Y_3\}$; in Example 2, $A = \{Y_1, Y_2\} \cup \{Y_1,Y_3\} \cup \{Y_2,Y_3\} = \{Y_1, Y_2, Y_3\}$. In Example 1, $\mathcal{Q} = \{q_1, q_2\}$, whereas in Example 2, $\mathcal{Q} = \{q_1, q_2, q_3, q_4\}$. We now justify the conditions in Definition \ref{def_I_cup} and the fact that they do not entail any loss of generality.
\begin{itemize}
    \item The condition that no source is a subset of another source (which also excludes the case where two sources are the same) implies no loss of generality: if one source is a subset of another, say $A_i \subseteq A_j$, then $A_i$ may be removed without affecting either $A$ or $\mathcal{Q}$, thus yielding the same value for $I^\text{CI}_\cup$. The removal of source $A_i$ is also performed for measures of intersection information, but under the opposite condition: whenever $A_j \subseteq A_i$.
    \item The condition that no source is a deterministic function of other sources is slightly more nuanced. In our perspective, an intuitive and desired property of measures of both union and synergistic information is that their value should not change whenever one adds a source that is a deterministic function of sources that are already considered. We provide arguments in favor of this property in Section \ref{stronger_monotonicity}. This property may not be satisfied by computing $I^\text{CI}_\cup$ without previously excluding such sources. For instance, consider $p(t, y_1, y_2, y_3)$, where $Y_1$ and $Y_2$ are two i.i.d. random variables following a Bernoulli distribution with parameter 0.5, $Y_3 = Y_2$ (that is, $Y_3$ is deterministic function of $Y_2$), and $T = Y_1 \text{ AND } Y_2$. Computing  $I^\text{CI}_\cup(Y_1, Y_2, Y_3)$ without excluding $Y_3$ (or $Y_2$) yields $I^\text{CI}_\cup(Y_1, Y_2, Y_3) = I_q(Y_1, Y_2, Y_3;T) \approx 0.6810$ and $I^\text{CI}_\cup(Y_1, Y_2) = I_q(Y_1, Y_2;T) \approx 0.5409$. This issue is resolved by removing deterministic sources before computing $I^\text{CI}_\cup$.
\end{itemize}

We conclude this section by commenting on the monotonicity of our measure. Suppose we wish to compute the union information of sources $\{(Y_1, Y_2), Y_3\}$ and $\{Y_1, Y_2, Y_3\}$. PID theory demands that $I_\cup^\text{CI}((Y_1, Y_2), Y_3) \geq I_\cup^\text{CI}(Y_1, Y_2, Y_3)$ (monotonicity of union information). Recall our motivation for $I_\cup^\text{CI}((Y_1, Y_2), Y_3)$: there are two agents, the first has access to $p(y_1, y_2 | t)$ and the second to $p(y_3|t)$. The two agents assume conditional independence of their variables and construct $q'(y_1, y_2, y_3, t) = p(t)p(y_1, y_2|t)p(y_3|t)$. The story is similar for the computation of $I_\cup^\text{CI}(Y_1, Y_2, Y_3)$, in which case we have three agents that construct $q''(y_1, y_2, y_3, t) = p(t)p(y_1|t)p(y_2|t)p(y_3|t)$. Now, it may be the case that $I_{q'}(Y;T) < I_{q''}(Y;T)$; considering only these two distributions would yield $I_\cup^\text{CI}((Y_1, Y_2), Y_3) < I_\cup^\text{CI}(Y_1, Y_2, Y_3)$, contradicting monotonicity for measures of union information. To overcome this issue, for the computation of $I_\cup^\text{CI}((Y_1, Y_2), Y_3)$---and other sources in general---the agent that has access to $p(y_1, y_2|t)$ must be allowed to disregard the conditional dependence of $Y_1$ and $Y_2$ on $T$, even if it holds in the original distribution $p$. 

\section{Properties of Measures of Union Information and Synergy\label{sec4}}

\subsection{Extension of the Williams--Beer Axioms for Measures of Union Information} \label{sec:WB}

As \citet{gutknecht2023babel} rightfully notice, the so-called Williams--Beer axioms \citep{williams2010nonnegative} can actually be derived from parthood distribution functions and the consistency \mbox{equation \citep{gutknecht2023babel}}. Consequently, they are not really axioms but consequences of the PID framework. As far as we know, there has been no proposal in the literature for the equivalent of the Williams--Beer axioms (which refer to measures of redundant information) for measures of union information. In the following, we extend the Williams--Beer axioms to measures of union information and show that the proposed $I^\text{CI}_\cup$ satisfies these axioms. Although we just argued against calling them axioms, we keep the designation \textit{Williams--Beer axioms} because of its popularity. Although the following definition is not in the formulation of \citet{gutknecht2021bits}, we suggest formally defining union information from the formulation of parthood functions as
\begin{equation}
I_\cup(Y_1, {\ldots}, Y_m;T) = \sum_{\exists i: f(Y_i) = 1} \Pi(f),
\end{equation}
where $f$ refers to a parthood function and $\Pi(f)$ is the information atom associated with the parthood function $f$. Given this formulation, the following properties must hold.
\begin{Definition} \label{WB}
Let $A_1, {\ldots}, A_m$ be an arbitrary number $m \geq 2$ of sources. A measure of union information  $I_\cup$ is said to satisfy the Williams--Beer axioms for union information measures if \mbox{it satisfies:}
\begin{enumerate}
    \item Symmetry: $I_\cup$ is symmetric in the $A_i$'s.
    \item Self-redundancy: $I_\cup(A_i) = I(A_i;T)$.
    \item Monotonicity: $I_\cup(A_1, {\ldots}, A_{m-1}, A_m) \geq I_\cup(A_1, {\ldots}, A_{m-1})$.
    \item Equality for monotonicity: $A_{m} \subseteq A_{m-1} \; \Rightarrow \; I_\cup(A_1, {\ldots}, A_{m-1}, A_m) = I_\cup(A_1, {\ldots}, A_{m-1})$.
\end{enumerate}
\end{Definition}

\begin{Theorem}
$I^\text{CI}_\cup$ satisfies the Williams--Beer axioms for measures of union information given in Definition \ref{WB}.
\end{Theorem}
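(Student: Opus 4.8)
The plan is to verify the four requirements of Definition \ref{WB} in turn, with symmetry and self-redundancy being essentially immediate and monotonicity being the substantive case. Throughout, I would rely on the structural description of $\mathcal{Q}$ made explicit in Examples~1 and~2: every $q \in \mathcal{Q}$ corresponds to a partition $\mathcal{P}$ of $A = \bigcup_i A_i$ into blocks $C$, each of which is a subset of at least one source $A_i$, with $q = p(t)\prod_{C \in \mathcal{P}} p(C \mid t)$. I would record this characterization at the outset, since both the monotonicity and the equality arguments hinge on it, and would write $\mathcal{Q}_A$ and $\mathcal{Q}_B$ for the distribution sets associated with the collections $\{A_1,\ldots,A_m\}$ and $\{A_1,\ldots,A_{m-1}\}$, with $B = A_1 \cup \cdots \cup A_{m-1}$.

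For symmetry, I would simply note that neither $A$, nor $\mathcal{Q}$, nor $I(A;T)$ depends on the ordering of the sources, so $I^\text{CI}_\cup$ is a symmetric function of $A_1, \ldots, A_m$. For self-redundancy, with a single source $A_i$ one has $A = A_i$, and the trivial partition into the single block $A_i$ yields $q = p(t)p(a_i \mid t)$, whose marginal on $(A_i,T)$ coincides with $p$; hence $I_q(A_i;T) = I(A_i;T)$, so $\max_{q \in \mathcal{Q}} I_q(A_i;T) \geq I(A_i;T) = I(A;T)$ and the outer minimum collapses to $I(A_i;T)$.

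The core of the proof is monotonicity, which I would obtain by combining two inequalities. First, $I(A;T) \geq I(B;T)$ is just monotonicity of mutual information under enlarging the observed variables. Second—the key step—I would show $\max_{q \in \mathcal{Q}_A} I_q(A;T) \geq \max_{q' \in \mathcal{Q}_B} I_{q'}(B;T)$ by an extension argument: given any $q'$ arising from a partition $\mathcal{P}'$ of $B$, set $D = A_m \setminus B \subseteq A_m$ and form $\mathcal{P} = \mathcal{P}' \cup \{D\}$, an admissible partition of $A$ because $D$ is a subset of the source $A_m$. The resulting $q = q'(b,t)\,p(D \mid t)$ lies in $\mathcal{Q}_A$, has $(B,T)$-marginal equal to $q'$, and makes $D$ and $B$ conditionally independent given $T$; the chain rule then gives $I_q(A;T) = I_{q'}(B;T) + I_q(D;T \mid B) \geq I_{q'}(B;T)$, and maximizing over $q'$ yields the claim. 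Finally, from $a \geq b$ and $c \geq d$ one has $\min\{a,c\} \geq \min\{b,d\}$, which applied to $a = I(A;T)$, $b = I(B;T)$, $c = \max_{q\in\mathcal{Q}_A} I_q(A;T)$, $d = \max_{q'\in\mathcal{Q}_B} I_{q'}(B;T)$ gives exactly monotonicity. The equality axiom is then a degenerate instance: if $A_m \subseteq A_{m-1}$ then $A = B$, and any admissible block for the larger collection is already admissible for the smaller one (a subset of $A_m$ is a subset of $A_{m-1}$), so $\mathcal{Q}_A = \mathcal{Q}_B$ and the two measures are literally equal.

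The step I expect to require the most care is the key monotonicity inequality, and specifically making the structural claim about $\mathcal{Q}$ precise, so that $\mathcal{P}' \cup \{D\}$ is genuinely an element of $\mathcal{Q}_A$ and the product form of $q$ is legitimate. I would also handle the boundary case $D = \emptyset$ (that is, $A_m \subseteq B$) separately, where one simply observes $q' \in \mathcal{Q}_B \subseteq \mathcal{Q}_A$; everything else reduces to the chain-rule decomposition and the elementary monotonicity of the minimum once that structural fact is in hand.
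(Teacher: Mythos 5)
Your proof is correct and follows essentially the same structure as the paper's: symmetry from symmetry of mutual information, self-redundancy from $p(t)p(a_i\mid t)$ lying in $\mathcal{Q}$, monotonicity by combining $I(A;T)\geq I(B;T)$ with a comparison of the maxima over $\mathcal{Q}_A$ and $\mathcal{Q}_B$ and then using monotonicity of the minimum, and the equality axiom from $\mathcal{Q}_A=\mathcal{Q}_B$ when $A_m\subseteq A_{m-1}$. Your one refinement is welcome: where the paper simply asserts $\mathcal{Q}_{m-1}\subseteq\mathcal{Q}_m$ (which is loose when $A_m\not\subseteq B$, since the two sets contain distributions over different variable sets), your explicit extension $q = q'\cdot p(D\mid t)$ with $D=A_m\setminus B$ and the chain-rule bound $I_q(A;T)=I_{q'}(B;T)+I_q(D;T\mid B)\geq I_{q'}(B;T)$ is the precise statement that the paper's inclusion claim is standing in for.
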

\begin{proof} We address each of the axioms in turn.
\begin{enumerate}
    \item Symmetry follows from the symmetry of mutual information, which in turn is a consequence of the well-known symmetry of joint entropy. 
    \item  Self-redundancy follows from the fact that agent $i$ has access to $p(A_i|T)$ and $p(T)$, which means that $p(A_i, T)$ is one of the distributions in the set $\mathcal{Q}$, which implies that $I_\cup(A_i) = I(A_i;T)$.
    \item To show that monotonicity holds, begin by noting that
$$I\left(\bigcup\limits_{i=1}^m A_i; T\right) \geq I\left(\bigcup\limits_{i=1}^{m-1} A_i; T\right),$$
due to the monotonicity of mutual information. Let $\mathcal{Q}_m$ be the set of distributions that the sources $A_1, {\ldots}, A_m$ can construct and $\mathcal{Q}_{m-1}$ that which the sources $A_1, {\ldots}, A_{m-1}$ can construct. Since $\mathcal{Q}_{m-1} \subseteq \mathcal{Q}_m$, it is clear that
$$\max_{q \in \mathcal{Q}_m}  I_q\left(\bigcup\limits_{i=1}^m A_i;T\right) \geq \max_{q \in \mathcal{Q}_{m-1}}  I_q\left(\bigcup\limits_{i=1}^{m-1} A_i;T\right).$$
Consequently,
\[
\min \biggl\{
I\left(\bigcup\limits_{i=1}^m A_i;T\right), \max_{q \in \mathcal{Q}_m}  I_q\left(\bigcup\limits_{i=1}^m A_i;T\right) \biggr\} \!
\geq   \min \biggl\{ I\left(\bigcup\limits_{i=1}^{m-1} A_i;T\right), \!\max_{q \in \mathcal{Q}_{m-1}} \! I_q\left(\bigcup\limits_{i=1}^{m-1} A_i;T\right)  \biggr\},
\]
which means monotonicity holds. 
\item Finally, the proof of equality for monotonicity is the same that was used above to show that the assumption that no source is a subset of another source entails no loss of generality. If $A_m \subseteq A_{m-1}$, then the presence of $A_m$ is irrelevant: $A = \bigcup\limits_{i=1}^m A_i = \bigcup\limits_{i=1}^{m-1} A_i$ and $\mathcal{Q}_m = \mathcal{Q}_{m-1}$, which implies that $I_\cup(A_1, {\ldots}, A_{m-1}, A_m) = I_\cup(A_1, {\ldots}, A_{m-1})$.
\end{enumerate}
\end{proof}

\subsection{Review of Suggested Properties: \citet{griffith2014quantifying}}
\label{sec:review}

We now review properties of measures of union information and synergy that have been suggested in the literature in chronological order. The first set of properties was suggested by \citet{griffith2014quantifying}, with the first two being the following.
\begin{itemize}
    \item \textit{{Duplicating a predictor does not change synergistic information}}; formally,
    $$S(A_1, {\ldots}, A_m \rightarrow T) = S(A_1, {\ldots}, A_m, A_{m+1} \rightarrow T),$$
where $A_{m+1} = A_i$, for some $i=1, {\ldots}, m$. \citet{griffith2014quantifying} show that this property holds if the equality for monotonicity property holds for the ``corresponding'' measure of union information (``corresponding'' in the sense of Equation \eqref{synergy_general_case}). As shown in the previous subsection, $I^\text{CI}_\cup$ satisfies this property, and so does the corresponding \mbox{synergy $S^\text{CI}$.}
\item \textit{{Adding a new predictor can decrease synergy}}, which is a weak statement. We suggest a stronger property: \textit{Following the monotonicity property, adding a new predictor cannot increase synergy}, which is formally written as 
$$S(A_1, {\ldots}, A_m \rightarrow T) \geq S(A_1, {\ldots}, A_m, A_{m+1} \rightarrow T).$$
This property simply follows from monotonicity for the corresponding measure of union information, which we proved above holds for $I^\text{CI}_\cup$.
\end{itemize}

 The next properties for any measure of union information were also suggested by \citet{griffith2014quantifying}:

\begin{enumerate}
    \item Global positivity: $I_\cup(A_1, {\ldots}, A_m) \geq 0$.
    \item Self-redundancy: $I_\cup(A_i) = I(A_i;T)$.
    \item Symmetry: $I_\cup(A_1, {\ldots}, A_m)$ is invariant under permutations of $A_1, {\ldots}, A_m$.
    \item Stronger monotonicity: $I_\cup(A_1, {\ldots}, A_m) \leq I_\cup(A_1, {\ldots}, A_m, A_{m+1})$, with equality if there is some $A_i$ such that $H(A_{m+1}|A_i)=0$.
    \item Target monotonicity: for any (discrete) random variables $T$ and $Z$, $I_\cup(A_1, {\ldots}, A_m \rightarrow T) \leq I_\cup(A_1, {\ldots}, A_m \rightarrow (T, Z))$.
    \item Weak local positivity: for $n=2$ the derived partial informations are non-negative. This is equivalent to
    $$\max \big\{ I(Y_1;T), I(Y_2;T) \big\} \leq I_\cup(Y_1, Y_2) \leq I(Y;T).$$
    \item Strong identity: $I_\cup(T \rightarrow T) = H(T)$.
\end{enumerate}

We argued before that self-redundancy and symmetry are properties that follow trivially from a well-defined measure of union information \citep{gutknecht2023babel}. In the following, we discuss in more detail properties 4 and 5, and return to the global positivity property later.

\subsubsection{Stronger Monotonicity} \label{stronger_monotonicity}
\textls[-15]{Property 4 in the above list was originally called monotonicity by \mbox{\citet{griffith2014quantifying};}} we changed its name because we had already defined monotonicity in Definition \ref{WB}, a weaker condition than stronger monotonicity. The proposed inequality clearly follows from the monotonicity of union information (the third Williams--Beer axiom). Now, if there is some $A_i$ such that $H(A_{m+1}|A_i)=0$ (equivalently, if $A_{m+1}$ is a deterministic function of $A_i$), \citet{griffith2014quantifying} suggest that we must have equality. Recall Axiom 4 (equality for monotonicity) in the extension of the WB axioms (Definition \ref{WB}). It states that equality must hold if $A_m \subseteq A_{m-1}$. In this context, $A_m$ and $A_{m-1}$ are sets of random variables, for example, $A_m = \{Y_1, Y_2\}$ and $A_{m-1} = \{Y_1, Y_2, Y_3\}$.
There is a different point of view we may take. The only way that $A_m$ is a subset of $A_{m-1}$ is if $A_m$, when viewed as a random vector (in this case, write $A_m = (Y_1, Y_2)$ and $A_{m-1} = (Y_1, Y_2, Y_3)$), is a subvector of $A_{m-1}$. A subvector of a random vector is a deterministic function, and no information gain can come from applying a deterministic function to a random vector. As such, there is no information gain when one considers $A_m$, a function of $A_{m-1}$, if one already has access to $A_{m-1}$.  \citet{griffith2014quantifying} argue similarly, there is no information gain by considering $A_{m+1}$---a function of $A_i$---in addition to $A_i$. In conclusion, considering the `equality for monotonicity' strictly through a set-inclusion perspective, stronger monotonicity does not follow. On the other hand, extending the idea of set inclusion to the more general context of functions of random variables, then stronger monotonicity follows, because $\{A_{m+1}\} = \{f(A_i)\}$ is a subset of $\{A_i\}$, hence there is no information gain by considering $A_{m+1}=f(A_i)$ in addition to $A_i$. As such, we obtain $I_\cup(A_1, {\ldots}, A_m) = I_\cup(A_1, {\ldots}, A_m, A_{m+1})$. Consequently, it is clear that stronger monotonicity must hold for any measure of union information. We, thus, argue in favor of extending the concept of subset inclusion of property 4 in Definition \ref{WB} to include the concept of deterministic functions.

\subsubsection{Target Monotonicity}
Let us move on to target monotonicity, which we argue should not hold in general. This precise same property was suggested, but for a measure of redundant information, by \citet{bertschinger2013shared};  they argue that a measure of redundant information should satisfy
$$I_\cap(A_1, {\ldots}, A_m \rightarrow T) \leq I_\cap(A_1, {\ldots}, A_m \rightarrow (T,Z)),$$
for any discrete random variable $Z$, as they argue that this property \textit{{`captures the intuition that if $A_1, {\ldots}, A_m$ share some information about $T$, then at least the same amount of information is available to reduce the uncertainty about the joint outcome of $(T,Z)$'}}. Since most PID approaches have been built upon measures of redundant information, it is simpler to refute this property. Consider $I_\cap^\text{d}$, which we argue is one of the most well-motivated and accepted measures of redundant information (as defined in \eqref{d}): it satisfies the WB axioms, it is based on the famous Blackwell channel preorder---thus inheriting a well-defined and rigorous operation interpretation---it is based on channels, just as $I(X;T)$ was originally motivated based on channels, and is defined for any number of input variables, which is more than most PID measures can accomplish. Consider also the distribution presented in Table \ref{tab7}, which satisfies $T = Y_1 \text{ AND } Y_2$ and $Z = (Y_1,Y_2)$.

    \begin{table}[H] 
\caption{Counter-example distribution for target monotonicity.
\label{tab7}}
\newcolumntype{C}{>{\centering\arraybackslash}X}
\begin{center}
\begin{tabularx}{0.7\textwidth}{CCCCC}
\toprule
    \textbf{T} & \textbf{Z} & \boldmath$Y_1$ & \boldmath$Y_2$ & \boldmath$p(t,z,y_1,y_2)$ \\
    \midrule
    0 & (0,0) & 0 & 0 & 1/4 \\
    \midrule
    0 & (0,1) & 0 & 1 & 1/4 \\
    \midrule
    0 & (1,0) & 1 & 0 & 1/4 \\
    \midrule
    1 & (1,1) & 1 & 1 & 1/4 \\
    \bottomrule
\end{tabularx}
\end{center}
\end{table}

From a game theory perspective, since neither agent ($Y_1$ or $Y_2$) has an advantage when predicting $T$ (because the channels that each agent has access to have the same conditional distributions), neither agent has any unique information. Moreover, redundancy---as computed by $I_\cap^\text{d}(Y_1, Y_2 \rightarrow T)$---evaluates to approximately 0.311. However, when considering the pair $(T,Z)$, the structure that was present in $T$ is now destroyed, in the sense that now there is no degradation order between the channels that each agent has access to. Note that $p((t,z), y_1, y_2)$ is a relabeling of the COPY distribution. As such, $I_\cap^\text{d}\big(Y_1, Y_2 \rightarrow (T,Z)\big)=0 < I_\cap^\text{d}(Y_1, Y_2 \rightarrow T)$, contradicting the property proposed by \citet{bertschinger2013shared}. 

For a similar reason, we believe that this property should not hold for a general measure of union information, even if the measure satisfies the extension of the Williams--Beer axioms, as our proposed measure does. For instance, consider the distribution presented in Table \ref{counter}. 

    \begin{table}[H] 
\caption{{Counter}-example distribution for target monotonicity.}
\label{counter}
\newcolumntype{C}{>{\centering\arraybackslash}X}
\begin{center}
\begin{tabularx}{0.7\textwidth}{CCCCC}
\toprule
    \textbf{T} & \textbf{Z} & \boldmath$Y_1$ & \boldmath$Y_2$ & \boldmath$p(t,z,y_1,y_2)$ \\
    \midrule
    0 & 0 & 1 & 0 & 0.419 \\
    \midrule
    1 & 1 & 2 & 1 & 0.203 \\
    \midrule
    2 & 1 & 3 & 0 & 0.007 \\
    \midrule
    0 & 0 & 3 & 1 & 0.346 \\
    \midrule
    2 & 2 & 4 & 4 & 0.025 \\
    \bottomrule
\end{tabularx}
\end{center}
\end{table}

This distribution yields $I_\cup^\text{CI}(Y_1, Y_2 \rightarrow T) \approx 0.91 > 0.90 \approx I_\cup^\text{CI}(Y_1, Y_2 \rightarrow (T,Z))$, meaning target monotonicity does not hold. This happens because although $I_p(Y;T) \leq I_p(Y;T,Z)$, it is not necessarily true that $I_q(Y;T) \leq I_q(Y;T,Z)$. We agree with the remaining properties suggested by \citet{griffith2014quantifying} and we will address those later.

\subsection{Review of Suggested Properties: \citet{quax2017quantifying}}
Moving on to additional properties, \citet{quax2017quantifying} suggest the following properties for a measure of synergy:
\begin{enumerate}
    \item Non-negativity: $S(A_1, {\ldots}, A_m \rightarrow T) \geq 0$.
    \item Upper-bounded by mutual information: $S(Y \rightarrow T) \leq I(Y;T)$.
    \item Weak symmetry: $S(A_1, {\ldots}, A_m \rightarrow T)$ is invariant under any reordering of $A_1, {\ldots}, A_m$.
    \item Zero synergy about a single variable: $S(Y_i \rightarrow T) = 0$ for any $i \in \{1, {\ldots}, n \}$.
    \item Zero synergy in a single variable: $S(Y \rightarrow Y_i) = 0$ for any $i \in \{1, {\ldots}, n \}$.
\end{enumerate}

Let us comment on the proposed `zero synergy' properties (4 and 5) under the context of PID. Property 4 seems to have been proposed with the rationale that synergy can only exist for at least two sources, which intuitively makes sense, as synergy is often defined as `the information that is present in the pair, but that is not retrievable from any individual variable'. However, because of the way a synergy-based PID is constructed---or weak-synergy, as \citet{gutknecht2023babel} call it---synergy must be defined as in \eqref{synergy_general_case}, so that, for example, in the bivariate case, $S(Y_1 \rightarrow T) := I(Y;T) - I_\cup(Y_1 \rightarrow T) = I(Y_2;T|Y_1)$, because of self-redundancy of union information and the chain rule of mutual information \citep{cover1999elements}, and since $I(Y_2;T|Y_1)$ is in general larger than 0, we reject the property `zero synergy about a single variable'. We note that our rejection of this property is based on the PID perspective. There may be other areas of research where it makes sense to demand that any (single) random variable alone has no synergy about any target, but under the PID framework, this must not happen, particularly so that we obtain a valid information decomposition.

Property 5, `zero synergy in a single variable', on the other hand, must hold because of self-redundancy. That is because, for any $i \in \{1, {\ldots}, n\}$, $S(Y \rightarrow Y_i) := I(Y;Y_i) - I_\cup(Y \rightarrow Y_i) = I(Y_i;Y_i) - I(Y;Y_i) = H(Y_i) - H(Y_i) = 0$.

\subsection{Review of Suggested Properties: \citet{rosas2020operational} }
Based on the proposals of \citet{griffith2014intersection}, \citet{rosas2020operational} suggested the following properties for a measure of synergy:
\begin{itemize}
    \item Target data processing inequality: if $Y - T_1 - T_2$ is a Markov chain, then $S(Y \rightarrow T_1) \geq S(Y \rightarrow T_2)$.
    \item Channel convexity: $S(Y \rightarrow T)$ is a convex function of $P(T|Y)$ for a given $P(Y)$.
\end{itemize}

We argue that a principled measure of synergy does not need to satisfy these properties (in general). Consider the distribution presented in Table \ref{tab6}, in which $T_1$ is a relabeling of the COPY distribution and $T_2 = Y_1 \text{ xor } Y_2$.

        \begin{table}[H] 
\caption{$T_1 =$ COPY, $T_2 =$ XOR.\label{table6}}
\label{tab6}

\newcolumntype{C}{>{\centering\arraybackslash}X}
\begin{center}\begin{tabularx}{0.75\textwidth}{CCCCC}
\toprule
    \boldmath$T_2$ & \boldmath$T_1$ & \boldmath$Y_1$ & \boldmath$Y_2$ & \boldmath$p(t_2, t_1 ,y_1, y_2)$ \\
    \midrule
    0 & 0 & 0 & 0 & 1/4 \\
    \midrule
    1 & 1 & 0 & 1 & 1/4 \\
    \midrule
    1 & 2 & 1 & 0 & 1/4 \\
    \midrule
    0 & 3 & 1 & 1 & 1/4 \\
    \bottomrule
\end{tabularx}
\end{center}
\end{table}
Start by noting that since $T_2$ is a deterministic function of $T_1$, then $Y - T_1 - T_2$ is a Markov chain. Since $Y_1 \perp Y_2 | T_1$, our measure $S^\text{CI}(Y_1, Y_2 \rightarrow T_1) = I(Y;T_1) - I_\cup^\text{CI}(Y_1, Y_2 \rightarrow T_1) = 0$ leads to zero synergy. On the other hand, $S^\text{CI}(Y_1, Y_2 \rightarrow T_2) = 1$, contradicting the first property suggested by \citet{rosas2020operational}. This happens because $Y_1 \not\perp_p Y_2 | T_2$, so synergy is positive. The loss of conditional independence of the inputs (given the target) when one goes from considering the target $T_1$ to $T_2$ is the reason why synergy increases. It can be easily seen that $S^d$, the measure of synergy derived from Kolchinsky's proposed union information measure $I_\cup^d$ \citep{kolchinsky2022novel}, agrees with this. A simpler way to see this is by noticing that the XOR distribution must yield 1 bit of synergistic information, and many PID measures do not yield 1 bit of synergistic information for the COPY distribution.

The second suggested property argues that synergy should be a convex function of $P(T|Y)$, for fixed $P(Y)$. Our measure of synergy does not satisfy this property, even though it is derived from a measure of union information that satisfies the extension of the WB axioms. For instance, consider the XOR distribution with one extra outcome. We introduce it in Table \ref{adaptedxor} and parameterize it using $r = p(T=0|Y=(0,0)) \in [0,\, 1]$. Notice that this modification does not affect $P(Y)$.

        \begin{table}[H] 
\caption{Adapted XOR distribution.}
\label{adaptedxor}
\newcolumntype{C}{>{\centering\arraybackslash}X}
\begin{center}
\begin{tabularx}{0.7\textwidth}{CCCc}
\toprule
    \boldmath$T$ & \boldmath$Y_1$ & \boldmath$Y_2$ & \boldmath$p(t,y_1, y_2)$ \\
    \midrule
    0 & 0 & 0 & $r/4$ \\
    \midrule
    1 & 0 & 0 & $(1-r)/4$ \\
    \midrule
    1 & 1 & 0 & 1/4 \\
    \midrule
    1 & 0 & 1 & 1/4 \\
    \midrule
    0 & 1 & 1 & 1/4 \\
    \bottomrule
\end{tabularx}
\end{center}
\end{table}

Synergy, as measured by $S^\text{CI}(Y_1, Y_2 \rightarrow T)$, is maximized when $r$ equals 1 (the distribution becomes the standard XOR) and minimized when $r$ equals 0. We do not see an immediate reason as to why a general synergy function should be convex in $p(t|y)$, or why it should have a unique minimizer as a function of $r$. Recall that a function $S$ is convex if $\forall t \in [0,1],  \forall x_1, x_2 \in D$, we have
$$S(tx_1 + (1-t)x_2) \leq tS(x_1) + (1-t)S(x_2).$$
In the following, we slightly abuse the notation of the input variables of a synergy function. Our synergy measure $S^\text{CI}$, when considered as a function of $r$, does not satisfy this inequality. For the adapted XOR distribution, take $t=0.5$, $x_1 = 0$, and $x_2 = 0.5$. We have $$S^\text{CI}(0.5 \times 0 + 0.5 \times 0.5) = S^\text{CI}(0.25) \approx 0.552$$
and
$$0.5 \times S^\text{CI}(0) + 0.5 \times S^\text{CI}(0.5) \approx 0.5 \times 0.270 + 0.5 \times 0.610 \approx 0.440,$$
contradicting the property of channel convexity. $S^\text{d}$ agrees with this. We slightly change $p(y)$ in the above distribution to obtain a new distribution, which we present in Table \ref{adaptedxorv2}.

        \begin{table}[H] 
\caption{{Adapted} XOR distribution v2.}
\label{adaptedxorv2}
\newcolumntype{C}{>{\centering\arraybackslash}X}
\begin{center}\begin{tabularx}{0.6\textwidth}{CCCc}
\toprule
    \boldmath$T$ & \boldmath$Y_1$ & \boldmath$Y_2$ & \boldmath$p(t,y_1, y_2)$ \\
    \midrule
    0 & 0 & 0 & $r/10$ \\
    \midrule
    1 & 0 & 0 & $(1-r)/10$ \\
    \midrule
    1 & 1 & 0 & 4/10 \\
    \midrule
    1 & 0 & 1 & 4/10 \\
    \midrule
    0 & 1 & 1 & 1/10 \\
    \bottomrule
\end{tabularx}
\end{center}
\end{table}

 This distribution does not satisfy the convexity inequality, since
$$S^\text{d}(0.5 \times 0 + 0.5 \times 0.5) \approx 0.338 > 0.3095 \approx 0.5S^\text{d}(0)+0.5S^\text{d}(0.5).$$
This can be easily seen since $K^{(1)} = K^{(2)}$ for any $r \in [0,1]$, hence we may choose $K^Q = K^{(1)}$ to compute $S^\text{d} = I(Y;T) - I(Q;T)$, which is not convex for this particular distribution. To conclude this section, we present a plot of $S^\text{CI}(Y_1, Y_2)$ and $S^\text{d}(Y_1, Y_2)$ as a function of $r$ in Figure \ref{convex}, for the distribution presented in Table \ref{adaptedxor}.
\begin{figure}[H]
\begin{center}
    \includegraphics[scale=0.7]{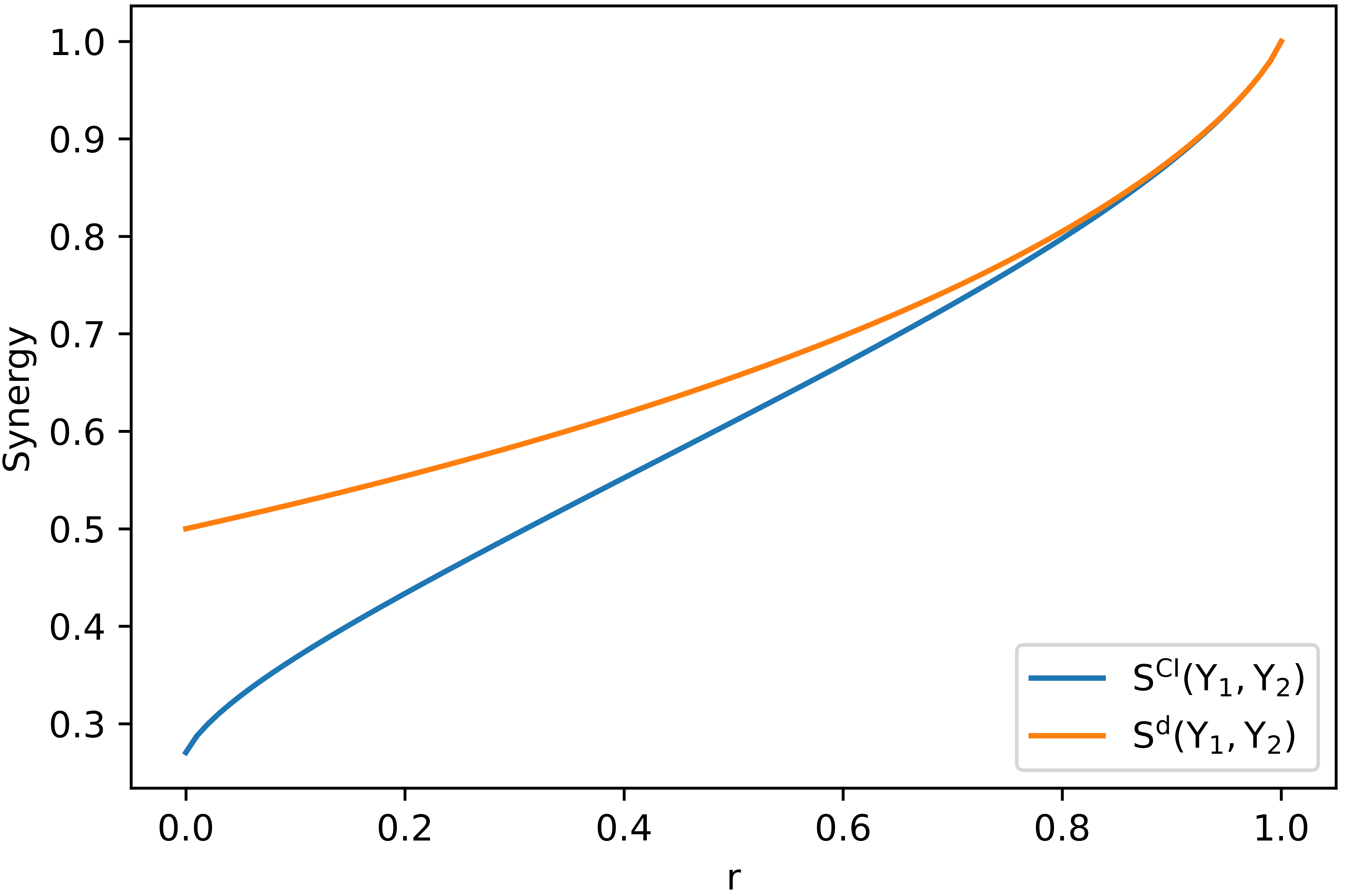}
    \caption{Computation of $S^{CI}$ and $S^{d}$ as functions of $r=p(T=0|Y=(0,0))$ for the distribution presented in Table \ref{adaptedxorv2}. As we showed for this distribution, $S^{CI}$ is not a convex function of $r$.}
    \label{convex}
\end{center}
\end{figure}

With this we do not mean that these properties should not hold for an arbitrary PID. We are simply showing that some properties must be satisfied in the context of PID (such as the WB axioms), whereas other properties are not necessary for PID (such as the previous two properties).

\subsection{Relationship with the Extended Williams--Beer Axioms}
We now prove which of the introduced properties are implied by the extension of the Williams--Beer axioms for measures of union information. In what follows, assume that the goal is to decompose the information present in the distribution $p(y,t) = p(y_1, {\ldots}, y_n, t)$.

\begin{Theorem} Let $I_\cup$ be a measure of union information that satisfies the extension of the Williams--Beer axioms (symmetry, self-redundancy, monotonicity, and equality for monotonicity) for measures of union information as in Definition \eqref{WB}. Then, $I_\cup$ also satisfies the following properties of \citet{griffith2014quantifying}: global positivity, weak local positivity, strong identity, ``duplicating a predictor does not change synergistic information'', and ``adding a new predictor cannot increase synergy''. \label{theo_2}
\end{Theorem}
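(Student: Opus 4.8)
The plan is to derive all five properties directly from the four axioms of Definition \ref{WB}, exploiting the fact that in the synergy-based decomposition the total mutual information $I(Y;T)$ is a fixed constant, so every statement about synergy translates into a statement about $I_\cup$ via $S(\alpha) = I(Y;T) - I_\cup(\alpha)$ (as in \eqref{synergy_general_case}). Most of the properties then follow by chaining together at most two axioms, and I would present them in an order that lets the easy computations be reused.

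First I would dispatch the three ``direct'' properties. \emph{Strong identity} is immediate from self-redundancy with a single source equal to $T$: $I_\cup(T \to T) = I(T;T) = H(T)$. \emph{Global positivity} follows by iterating monotonicity down to a single source, $I_\cup(A_1, \ldots, A_m) \geq I_\cup(A_1) = I(A_1;T) \geq 0$, the final step using self-redundancy and the non-negativity of mutual information. For the two synergy properties I would treat $I(Y;T)$ as constant and pass each $I_\cup$ inequality through the minus sign: \emph{``adding a new predictor cannot increase synergy''} is then just monotonicity of $I_\cup$, and \emph{``duplicating a predictor does not change synergy''} is equality-for-monotonicity, since a duplicate $A_{m+1} = A_i$ satisfies $A_{m+1} \subseteq A_i$, so by symmetry it can be placed last and removed without changing $I_\cup$.

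The one step that requires genuine care is the upper bound in \emph{weak local positivity}, $I_\cup(Y_1, Y_2) \leq I(Y;T)$; the matching lower bound $\max\{I(Y_1;T), I(Y_2;T)\} \leq I_\cup(Y_1, Y_2)$ is again just monotonicity plus self-redundancy. For the upper bound the plan is to augment the collection with the joint source itself and consider $I_\cup\big(Y_1, Y_2, (Y_1,Y_2)\big)$. On the one hand, monotonicity gives $I_\cup\big(Y_1, Y_2, (Y_1,Y_2)\big) \geq I_\cup(Y_1, Y_2)$. On the other hand, since $Y_1 \subseteq (Y_1,Y_2)$ and $Y_2 \subseteq (Y_1,Y_2)$, two applications of equality-for-monotonicity (using symmetry to reorder) collapse the collection to the single source $(Y_1, Y_2)$, whence self-redundancy yields $I_\cup\big(Y_1, Y_2, (Y_1,Y_2)\big) = I\big((Y_1,Y_2);T\big) = I(Y;T)$. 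Combining the two bounds gives $I(Y;T) \geq I_\cup(Y_1, Y_2)$.

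I expect this ``augment-then-collapse'' argument for the upper bound to be the only non-mechanical part, since it is the place where monotonicity and equality-for-monotonicity must be used against one another rather than in isolation; everything else is a one- or two-line consequence of a single axiom together with the non-negativity and symmetry of mutual information. The main task in writing the full proof is therefore simply to lay out the five derivations cleanly, being careful to state explicitly where symmetry is invoked to reorder the sources before applying equality-for-monotonicity.
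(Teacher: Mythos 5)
Your proposal is correct and follows essentially the same route as the paper: strong identity from self-redundancy, global positivity from monotonicity plus self-redundancy, the two synergy properties by passing monotonicity and equality-for-monotonicity through $S(\alpha) = I(Y;T) - I_\cup(\alpha)$, and weak local positivity's upper bound via exactly the paper's augment-then-collapse step $I_\cup(Y_1, Y_2) \leq I_\cup\big(Y_1, Y_2, (Y_1,Y_2)\big) = I_\cup\big((Y_1,Y_2)\big) = I(Y;T)$. The only cosmetic difference is that you spell out the synergy-sign argument and the reordering via symmetry, which the paper instead delegates to its earlier discussion in Section 4.2.
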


\begin{proof}
We argued before that the last two properties follow from the definition of a measure of union information. Global positivity is a direct consequence of monotonicity and the non-negativity of mutual information:  $I_\cup(A_1, {\ldots}, A_m) \geq I_\cup(A_1) = I(A_1;T) \geq 0$.

Weak local positivity holds because monotonicity and self-redundancy imply that $I_\cup(Y_1, Y_2) \geq I_\cup(Y_1) = I(Y_1;T)$, as well as $I_\cup(Y_1, Y_2) \geq I(Y_2;T)$, hence $\max \{ I(Y_1;T), I(Y_2;T) \}$ $\leq I_\cup(Y_1, Y_2)$. Moreover, $I_\cup(Y_1, Y_2) \leq I_\cup(Y_1, Y_2, Y_{12}) = I_\cup(Y_{12}) = I(Y;T)$.

Strong identity follows trivially from self-redundancy, since $I_\cup(T \rightarrow T) = I(T;T) = H(T)$. 
\end{proof}

\begin{Theorem} Consider a measure of union information that satisfies the conditions of Theorem \ref{theo_2}. If synergy is defined as in Equation \eqref{synergy_general_case}, it satisfies the following properties of \cite{quax2017quantifying}:  non-negativity, upper-bounded by mutual information, weak symmetry, and zero synergy in a single variable.
\end{Theorem}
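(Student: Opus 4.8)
The plan is to verify the four claimed properties one at a time, using in each case only the four extended Williams--Beer axioms (symmetry, self-redundancy, monotonicity, equality for monotonicity) together with the global positivity already established in Theorem~\ref{theo_2}. Throughout I would write $S(\alpha \rightarrow T) = I(Y;T) - I_\cup(\alpha \rightarrow T)$ as in \eqref{synergy_general_case} and keep in mind that the term $I(Y;T)$ is a constant that does not depend on the collection of sources $\alpha$, which is what makes three of the four properties almost immediate.

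First I would dispatch the three easy cases. Weak symmetry is inherited directly from the symmetry axiom: $I(Y;T)$ is independent of the ordering of $A_1, \ldots, A_m$ and $I_\cup$ is symmetric, so their difference $S$ is invariant under reordering. Zero synergy in a single variable follows from self-redundancy alone: taking the full source $Y$ and the target $Y_i$, self-redundancy gives $I_\cup(Y \rightarrow Y_i) = I(Y;Y_i)$, so that $S(Y \rightarrow Y_i) = I(Y;Y_i) - I(Y;Y_i) = 0$ (and indeed both terms equal $H(Y_i)$ since $Y_i$ is a component of $Y$). The upper bound by mutual information is just a restatement of global positivity: because $S(Y \rightarrow T) = I(Y;T) - I_\cup(Y \rightarrow T)$ with $I_\cup \geq 0$ (Theorem~\ref{theo_2}), we obtain $S(Y \rightarrow T) \leq I(Y;T)$.

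The substantive step, and the one I expect to be the main obstacle, is non-negativity, i.e. showing $I_\cup(A_1, \ldots, A_m) \leq I(Y;T)$ for an arbitrary collection of sources. The difficulty is that monotonicity by itself only compares a collection against a larger collection, not against the full mutual information $I(Y;T)$, so some bridging construction is needed. The trick I would use is to first inflate the collection by the full source $Y = \{Y_1, \ldots, Y_n\}$ and then deflate it: monotonicity gives $I_\cup(A_1, \ldots, A_m) \leq I_\cup(A_1, \ldots, A_m, Y)$, and since every $A_i \subseteq Y$, repeated application of the equality-for-monotonicity axiom (using symmetry to reorder each $A_i$ into the position required by that axiom relative to $Y$) removes all the $A_i$, yielding $I_\cup(A_1, \ldots, A_m, Y) = I_\cup(Y)$. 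Self-redundancy then gives $I_\cup(Y) = I(Y;T)$, and chaining the inequalities produces $I_\cup(A_1, \ldots, A_m) \leq I(Y;T)$, hence $S(A_1, \ldots, A_m \rightarrow T) \geq 0$. The only point requiring care is the legitimacy of the subset-removal step, which is exactly where symmetry is invoked to bring each $A_i$ into the configuration $A_i \subseteq Y$ demanded by the equality-for-monotonicity axiom before applying it.
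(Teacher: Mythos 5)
Your proof is correct and takes essentially the same route as the paper's: weak symmetry from the symmetry axiom, zero synergy in a single variable from self-redundancy, the upper bound from global positivity, and non-negativity from the bound $I_\cup(A_1, \ldots, A_m \rightarrow T) \leq I(Y;T)$. Your bridging construction for that last bound (adjoin the full source $Y$ by monotonicity, strip each $A_i \subseteq Y$ via symmetry plus equality-for-monotonicity, then apply self-redundancy) is precisely the argument the paper uses for weak local positivity in Theorem~\ref{theo_2}, generalized from the bivariate case to arbitrary collections; the paper's proof of this theorem simply cites that fact without re-deriving it, so your write-up just makes explicit what the paper leaves implicit.
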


\begin{proof}
Non-negativity of synergy and upper-bounded by mutual information follow from the definition of synergy and from the fact that for whichever source $(A_1, {\ldots}, A_m)$, with $m \geq 1$, we have that $I_\cup\left(A_1, {\ldots}, A_m \rightarrow T \right) \leq I(Y;T)$. 

Weak symmetry follows trivially from the fact that both $I\left(Y; T\right)$ and $I_\cup\left(A_1, {\ldots}, A_m \rightarrow T\right)$ are symmetric in the relevant arguments. 

Finally, zero synergy in a single variable follows from self-redundancy together with the definition of synergy, as shown above.
\end{proof}

\section{Previous Measures of Union Information and Synergy}
\label{sec:prior}

We now review other measures of union information and synergy proposed in the literature. For the sake of brevity, we will not recall all their definitions, only some important conclusions. We suggest the interested reader consult the bibliography for \mbox{more information.}

\subsection{Qualitative Comparison}
\citet{griffith2014quantifying} review three previous measures of synergy:
\begin{itemize}
    \item $S^\text{WB}$, derived from $I_\cap^\text{WB}$, the original redundancy measure proposed by \citet{williams2010nonnegative}, using the IEP;
    \item the \textit{whole-minus-sum} (WMS) synergy, $S^\text{WMS}$; 
    \item the \textit{correlational importance} synergy, $S^{\Delta I}$.
\end{itemize}  

These synergies can be interpreted as resulting directly from measures of union information; that is, they are explicitly written as $S(\alpha \rightarrow T) = I(Y;T) - I_\cup(\alpha \rightarrow T)$, where $I_\cup$ may not necessarily satisfy our intuitions of a measure of union information, as in Definition \ref{WB}, except for $S^{\Delta I}$, which has the form of a Kullback--Leibler divergence.

\citet{griffith2014quantifying} argue that $S^\text{WB}$ \textit{overestimates} synergy, which is not a surprise, as many authors criticized $I_\cap^\text{WB}$ for not measuring informational content, only informational values \citep{harder2013bivariate}. The WMS synergy, on the other hand, which can be written as a difference of total correlations, can be shown to be equal to the difference between synergy and redundancy for $n=2$, which is not what is desired in a measure of synergy. For $n>2$, the authors show that the problem becomes even more exacerbated: $S^\text{WMS}$ equals synergy minus the redundancy \textit{counted multiple times}, which is why the authors argue that $S^\text{WMS}$ \textit{underestimates} synergy. Correlational importance, $S^{\Delta I}$, is known to be larger than $I(Y;T)$ for some distributions, excluding it from being an appropriately interpretable measure \mbox{of synergy.}

Faced with these limitations, \citet{griffith2014quantifying} introduce their measure of union information, which they define as
\vspace{-6pt}
\begin{equation}\label{VK}
\begin{aligned}
I_\cup^\text{VK}(A_1, {\ldots}, A_m \rightarrow T) := & \min_{p^*} I_{p^*}\left(\bigcup\limits_{i=1}^m A_i;T\right)\\
& \textrm{ s.t. } p^*(A_i,T) = p(A_i, T), i = 1, {\ldots}, m,
\end{aligned}
\end{equation}
where the minimization is over joint distributions of $A_1,{\ldots},A_m,T$, alongside the derived measure of synergy $S^\text{VK}(\alpha \rightarrow T) = I(Y;T) - I_\cup^\text{VK}(\alpha \rightarrow T)$. This measure quantifies union information as the least amount of information that source $\alpha$ has about $T$ when the source--target marginals (as determined by $\alpha$) are fixed by $p$. \citet{griffith2014quantifying} also established the following inequalities for the synergistic measures they reviewed:
\begin{align}
    \max\left\{0, S^\text{WMS}(\alpha \rightarrow T) \right\} \leq S^\text{VK}(\alpha \rightarrow T) \leq S^\text{WB}(\alpha \rightarrow T) \leq I\left(\bigcup\limits_{i=1}^m A_i;T\right),
\end{align}
where $\alpha = (A_1, {\ldots}, A_m)$. 
At the time, \citet{griffith2014quantifying} did not provide a way to analytically compute their measure. Later, \citet{kolchinsky2022novel} showed that the measure of union information derived from the degradation order, $I_\cup^\text{d}$, is equivalent to $I_\cup^\text{VK}$, and provided a way to compute it. For this reason, we will only consider $I_\cup^\text{d}$.

After the work of \citet{griffith2014quantifying} in 2014, we are aware of only three other suggested measures of synergy:
\begin{itemize}
    \item $S^\text{MSRV}$, proposed by \citet{quax2017quantifying}, where MSRV stands for \textit{maximally synergistic random variable};
    \item $S^\text{SD}$, proposed by \citet{rosas2020operational}, where SD stands for \textit{synergistic disclosure};
    \item $S^d$, proposed by  \citet{kolchinsky2022novel}.
\end{itemize}

The first two proposals do not define synergy via a measure of union information. They define synergy through an auxiliary random variable, $Z$, which has positive information about the whole---that is, $I(Z;Y)>0$---but no information about any of the parts---that is, $I(Z;Y_i)=0, i=1, {\ldots}, n$. While this property has an appealing operational interpretation, we believe that it is too restrictive; that is, we believe that information \textit{can} be synergistic, even if it provides some positive information about some part of $Y$.

The authors of $S^\text{MSRV}$ show that their proposed measure is incompatible with PID and that it cannot be computed for all distributions, as it requires the ability to compute orthogonal random variables, which is not always possible \citep{quax2017quantifying}. A counter-intuitive example for the value of this measure can be seen for the AND distribution, defined by $T = Y_1 \text{ AND } Y_2$, with $Y_1$ and $Y_2$ i.i.d. taking values in $\{0, 1\}$ with equal probability. In this case, $S^\text{MSRV} = 0.5$, a value that we argue is too large, because whenever $Y_1$ (respectively $Y_2$) is 0, then $T$ does not depend on $Y_2$ (respectively $Y_1$) (which happens with probability 0.75). Consequently, $S^\text{MSRV}/I(Y;T) \approx 0.5/0.811 \approx 0.617$ may be too large of a synergy ratio for this distribution. As the authors note, the only other measure that agrees with $S^\text{MSRV}$ for the AND distribution is $S^\text{WB}$, which \citet{griffith2014quantifying} argued also overestimates synergy.

Concerning $S^\text{SD}$, we do not have any criticism, except for the one already pointed out by \citet{gutknecht2023babel}: they note that the resulting decomposition from $S^\text{SD}$ is not a standard PID, in the sense that it does not satisfy a consistency equation (see \citep{gutknecht2023babel} for more details), which implies that `{\ldots} the atoms cannot be interpreted in terms of parthood relations with respect to mutual information terms {\ldots}. For example, we do not obtain any atoms interpretable as unique or redundant information in the case of two sources' \citep{gutknecht2023babel}. \citet{gutknecht2023babel} suggest a very simple modification to the measure so that it satisfies the consistency equation.

For the AND distribution, $S^\text{SD}$ evaluates to approximately $0.311$, as does $S^\text{d}$, whereas our measure yields $S^\text{CI} \approx 0.270$, as the information that the parts cannot obtain when they combine their marginals, under distribution $q$. This shows that these four measures are \mbox{not equivalent.}

\subsection{Quantitative Comparison}

\citet{griffith2014quantifying} applied the synergy measures they reviewed to other distributions. We show their results below in Table \ref{tab:results} and compare them with the synergy resulting from our measure of union information, $S^\text{CI}$, with the measure of \citet{rosas2020operational}, $S^\text{SD}$, and that of \citet{kolchinsky2022novel}, $S^\text{d}$. Since the code for the computation of $S^\text{MSRV}$ is no longer available online, we do not present it.

\begin{table}[H]
\caption{Application of the measures reviewed in \citet{griffith2014quantifying} ($S^\text{WB}$, $S^\text{WMS}$, and $S^{\Delta I}$), $S^\text{SD}$ introduced by \citet{rosas2020operational}, $S^\text{d}$ introduced by \citet{kolchinsky2022novel}, and our measure of synergy $S^\text{CI}$ to different distributions. The bottom four distributions are trivariate. We write DNF to mean that a specific computation did not finish within 10 min.}\label{tab1}
\begin{tabularx}{\textwidth}{lCCCCCC}
\toprule
\textbf{Example} & \boldmath$S^\text{WB}$ & \boldmath$S^\text{WMS}$ & \boldmath$S^{\Delta I}$ & \boldmath$S^\text{d}$ & \boldmath$S^\text{SD}$ & \boldmath$S^\text{CI}$ \\
\midrule
XOR & 1 & 1 & 1 & 1 & 1 & 1\\
AND & 0.5 & 0.189 & 0.104 & 0.5 & 0.311 & 0.270\\
COPY & 1 & 0 & 0 & 0 & 1 & 0\\
RDNXOR & 1 & 0 & 1 & 1 & 1 & 1\\
RDNUNQXOR & 2 & 0 & 1 & 1 & DNF & 1\\
\hline
XORDUPLICATE & 1 & 1 & 1 & 1 & 1 & 1\\
ANDDUPLICATE & 0.5 & -0.123 & 0.038 & 0.5 & 0.311 & 0.270\\
XORLOSES & 0 & 0 & 0 & 0 & 0 & 0\\
XORMULTICOAL & 1 & 1 & 1 & 1 & DNF & 1\\
\bottomrule
\end{tabularx}
\label{tab:results}
\end{table}

We already saw the definition of the AND, COPY, and XOR distributions. The XORDUPLICATE and ANDDUPLICATE are built from the XOR and the AND distributions by inserting a duplicate source variable $Y_3 = Y_2$. The goal is to test if the presence of a duplicate predictor impacts the different synergy measures. The definitions of the remaining distributions are presented in Appendix \ref{apdix}. 
Some of these are trivariate, and for those we compute synergy as
\begin{align} \label{synergytrivariate}
S(Y_1, Y_2, Y_3 \rightarrow T) = I(Y_1, Y_2, Y_3;T) - I_\cup(Y_1, Y_2, Y_3 \rightarrow T),
\end{align}
unless the synergy measure is directly \textls[-5]{defined (as opposed to being defined via a union information measure). We now comment on the results. It should be noted that \mbox{\citet{kolchinsky2022novel}}} suggested that unique information $U_1$ and $U_2$ should be computed from measures of redundant information, and excluded information $E_1$ and $E_2$ should be computed from measures of union information, as in our case. However, since we will only present the decompositions for the bivariate case and in this case $E_1 = U_2$ and $E_2 = U_1$, we present the results considering unique information, as is mostly performed in the literature.

\begin{itemize}
    \item XOR yields $I(Y;T)=1$. The XOR distribution is the hallmark of synergy. Indeed, the only solution of \eqref{decomposition} is $(S, R, U_1, U_2) = (1,0,0,0)$, and all of the above measures yield \mbox{1 bit} of synergy.
    \item AND yields $I(Y;T) \approx 0.811 $. Unlike XOR, there are multiple solutions for \eqref{decomposition}, and none is universally agreed upon, since different information measures capture different concepts of information.
    \item COPY yields $I(Y;T)=2$. Most PID measures argue one of two different possibilities for this distribution. They suggest that the solution is either $(S, R, U_1, U_2) = (1,1,0,0)$ or $(0,0,1,1)$. Our measure suggests that all information flows uniquely from \mbox{each source}.
    \item RDNXOR yields $I(Y;T)=2$. In words, this distribution is the concatenation of two XOR `blocks', each of which has its own symbols, and not allowing the two blocks to mix. That is, both $Y_1$ and $Y_2$ can determine in which XOR block the resulting value $T$ will be---which intuitively means that they both have this information, meaning it is redundant---but neither $Y_1$ nor $Y_2$ have information about the outcome of the XOR operation---as is expected in the XOR distribution---which intuitively means that such information must be synergistic. All measures except $S^\text{WMS}$ agree with this.
    \item RDNUNQXOR yields $I(Y;T)=4$. According to \citet{griffith2014quantifying}, it was constructed to carry 1 bit of each information type. Although the solution is not unique, it must satisfy $U_1 = U_2$. Indeed, our measure yields the solution $(S, R, U_1, U_2) = (1,1,1,1)$, like most measures except $S^\text{WB}$ and $S^\text{WMS}$. This confirms the intuition by \citet{griffith2014quantifying} that $S^\text{WB}$ and $S^\text{WMS}$ overestimate and underestimate synergy, respectively. In fact, in the decomposition resulting from $S^\text{WB}$, there are 2 bits of synergy and 2 bits of redundancy, which we argue cannot be the case, as this would imply that $U_1=U_2=0$, and given the construction of this distribution, it is clear that there is some unique information since, unlike in RDNXOR, the XOR blocks are allowed to mix, thus $(T, Y_1, Y_2) = (1,0,1)$ is a possible outcome, but so is $(T, Y_1, Y_2) = (2,0,2)$. That is not the case with RDNXOR. On the other hand, $S^\text{WMS}$ yields zero synergy and redundancy, with $U_1$ and $U_2$ each evaluating to 2 bits. Since this distribution is a mix of blocks satisfying a relation of the form $T = Y_1 \text{ xor } Y_2$, we argue that there must be some non-null amount of synergy, which is why we claim that $S^\text{WMS}$ is not valid.
    \item XORDUPLICATE yields $I(Y;T)=1$. All measures correctly identify that the duplication of a source should not change synergy, at least for this particular distribution.
    \item ANDDUPLICATE yields $I(Y;T) \approx 0.811$. Unlike in the previous example, both $S^\text{WMS}$ and $S^{\Delta I}$ yield a change in their synergy value. This is a shortcoming, since duplicating a source should not increase either synergy or union information. The other measures are not affected by the duplication of a source.
    \item XORLOSES yields $I(Y;T)=1$. Its distribution is the same as XOR but with a new source $Y_3$ satisfying $T=Y_3$. As such, since $Y_3$ uniquely determines $T$, we expect no synergy. All measures agree with this.
    \item XORMULTICOAL yields $I(Y;T)=1$. Its distribution is such that any pair $(Y_i, Y_j)$, $i,j=1, 2, 3, i \neq j$ is able to determine $T$ with no uncertainty. All measures agree that the information present in this distribution is purely synergistic.
\end{itemize}

From these results, we agree with \citet{griffith2014quantifying} that $S^\text{WB}$, $S^\text{WMS}$, and $S^{\Delta I}$ are not good measures of synergy: they do not satisfy many of our intuitions and overestimate synergy, not being invariant to duplicate sources or taking negative values. For these reasons, and those presented in Section \ref{sec:prior}, we reject those measures of synergy. In the next section, we comment on the remaining measures $S^\text{d}$, $S^\text{SD}$, and $S^\text{CI}$.


\subsection{Relation to Other PID Measures}

\citet{kolchinsky2022novel} introduced $I_\cup^\text{d}$ and showed that this measure is equivalent to $I_\cup^\text{VK}$ \citep{griffith2014quantifying} and to $I_\cup^\text{BROJA}$ \citep{bertschinger2014quantifying}, in the sense that the three of them achieve the same optimum value \citep{kolchinsky2022novel}. The multivariate extension of $I_\cup^\text{BROJA}$ was proposed by \citet{griffith2014quantifying}, defined as
$$I_\cup^\text{BROJA}(A_1, {\ldots}, A_m \rightarrow T) := \min_{\tilde{A_1}, {\ldots}, \tilde{A_m}} I(\tilde{A_1}, {\ldots}, \tilde{A_m}; T) \ \  \text{such that} \ \  \forall i \ \ P(\tilde{A_i}, T) = P(A_i, T),$$
which we present because it makes it clear what conditions are enforced upon the marginals.
There is a relation between $I_\cup^\text{BROJA}(A_1, {\ldots}, A_m) = I_\cup^\text{d}(A_1, {\ldots}, A_m)$ and $I_\cup^\text{CI}(A_1, {\ldots}, A_m)$ whenever the sources $A_1, {\ldots} A_m$ are singletons. In this case, and only in this case, the set $\mathcal{Q}$ involved in the computation of $I_\cup^\text{CI}(A_1, {\ldots}, A_m)$ has only one element: $q(t, a_1, {\ldots}, a_m)=p(t)p(a_1|t){\ldots}p(a_m|t)$. Since this distribution, as well as the original distribution $p$, are both admissible points in $I_\cup^\text{d}$, we have that $I_\cup^\text{d} \leq I_\cup^\text{CI}$, which implies that $S^\text{d} \geq S^\text{CI}$. On the other hand, if there is at least one source $A_1, {\ldots}, A_m$ that is not a singleton, the measures are not trivially comparable. For example, suppose we wish to compute $I_\cup((Y_1, Y_2), (Y_2,Y_3))$. We know that the solution of $I_\cup^\text{d}((Y_1, Y_2), (Y_2,Y_3))$ is a distribution $p^*$ whose marginals $p^*(y_1, y_2, t)$ and $p^*(y_2, y_3, t)$ must coincide with the marginals under the original $p$. However, in the computation of $I_\cup^\text{CI}((Y_1, Y_2), (Y_2,Y_3))$, it may be the case that the solution $p^*$ of $I_\cup^\text{d}((Y_1, Y_2), (Y_2,Y_3))$ is not in the set $\mathcal{Q}$, involved in the computation of $I_\cup^\text{CI}$, and it achieves a lower mutual information with $T$. That is, it might be the case that $I_{p^*}(Y;T) < I_{q}(Y;T)$, for all $q \in \mathcal{Q}$. In such a case, we would have $I_\cup^\text{d} > I_\cup^\text{CI}$.

It is convenient to be able to upper-bound certain measures with other measures. For example, \citet{gomes2023orders} (see that paper for the definitions of these measures) showed that for any source $(A_1, {\ldots}, A_m), m \geq 1$,
$$I_\cap^\text{d}(A_1, {\ldots}, A_m) \leq I_\cap^\text{ln}(A_1, {\ldots}, A_m) \leq I_\cap^\text{mc}(A_1, {\ldots}, A_m).$$
However, we argue that the inability to draw such strong conclusions (or bounds) is a positive aspect of PID. This is because there are many different ways to define the information (be it redundant, unique, union, etc.) that one wishes to capture. If one could trivially relate all measures, it would mean that it would be possible to know \textit{a priori} how those measures would behave. Consequently, this would imply the absence of variability/freedom in how to measure different information concepts, as those measures would capture, non-equivalent but similar types of information, as they would all be ordered. It is precisely because one cannot order different measures of information trivially that PID provides a rich and complex framework to distinguish different types of information, although we believe that PID is still in its infancy.

\citet{james2018unique} introduced a measure of unique information, which we recall now. In the bivariate case---i.e., consider $p(y_1, y_2, t)$---let $q$ be the maximum entropy distribution that preserves the marginals $p(y_1, t)$ and $p(y_2,t)$, and let $r$ be the maximum entropy distribution that preserves 
the marginals $p(y_1, t)$, $p(y_2,t)$, and $p(y_1, y_2)$. Although there is no closed form for $r$, which has to be computed using an iterative algorithm \citep{krippendorff2009ross}, it may be shown that the solution for $q$ is $q(y_1, y_2, t) = p(t)p(y_1|t)p(y_2|t)$ (see, e.g., \citep{james2018unique}). This is the same distribution $q$ that we consider for the bivariate decomposition \eqref{bivariateq}. \citet{james2018unique} suggest defining unique information $U_i$ as the least change (in sources--target mutual information) that involves the addition of the $(Y_i,T)$ marginal constraint, that is,
\begin{align}
    U_1 = \min \{ I_q(Y_1;T|Y_2),  I_r(Y_1;T|Y_2) \},
\end{align}
and analogously for $U_2$. They show that their measure yields a non-negative decomposition for the bivariate case. Since $I(Y_1;T|Y_2) = S + U_1$, some algebra leads to
\begin{align}
    S^\text{dep} = I(Y;T) - \min \{ I_q(Y;T),  I_r(Y;T) \},
\end{align}
where $S^\text{dep}$ is the synergy resulting from the decomposition of \citet{james2018unique} in the bivariate case. Recall that our measure of synergy for the bivariate case is given by 
\begin{align}
    S^\text{CI} = I(Y;T) - \min \{ I_q(Y;T),  I_p(Y;T) \}.
\end{align}
The similarity is striking. Computing $S^\text{dep}$ for the bivariate distributions in Table \ref{tab:results} shows that it coincides with the decomposition given by our measure, although this is not the case in general. We could not obtain $S^\text{dep}$ for the RDNUNQXOR distribution because the algorithm that computes $r$ did not finish in the allotted time of 10 min. \citet{james2018unique} showed that for whichever bivariate distribution $I_r(Y;T) \leq I_p(Y;T)$; therefore, for the bivariate case we have $S^\text{CI} \leq S^\text{dep}$. Unfortunately, the measure of unique information proposed by \citet{james2018unique}, unlike the usual proposals of intersection or union information, does not allow for the computation of the partial information atoms in the complete redundancy lattice if $n>2$. The authors also comment that it is not clear if their measure satisfies monotonicity when $n>2$. Naturally, our measure is not the same as $S^\text{dep}$, so it does not retain the operational interpretation of unique information $U_i$ being the least amount that influences $I(Y;T)$ when the marginal constraint $(Y_i,T)$ is added to the resulting maximum entropy distributions. Given the form of $ S^\text{dep}$, one could define $I_\cup^\text{dep} := \min \{ I_q(Y;T),  I_r(Y;T) \}$ and study its properties. Clearly, it does not satisfy the self-redundancy axiom, but we wonder if it could be adjusted so that it satisfies all of the proposed axioms. The $n=2$ decomposition retains the operational interpretation of the original measure, but it is not clear whether this is true for $n>2$. For the latter case, the maximum entropy distributions that we wrote as $q$ and $r$ have different definitions \citep{james2018unique}. We leave this for future work.

\section{Conclusions and Future Work}
In this paper, we introduced a new measure of \textit{union information} for the \textit{partial information decomposition} (PID) framework, based on the channel perspective, which quantifies synergy as the information that is beyond conditional independence of the sources, given the target. This measure has a clear interpretation and is very easy to compute, unlike most measures of union information or synergy, which require solving an optimization problem. The main contributions and conclusions of the paper can be summarized as follows.
\begin{itemize}
    \item We introduced new measures of union information and synergy for the PID framework, which thus far was mainly developed based on measures of redundant or unique information. We provided its operational interpretation and defined it for an arbitrary number of sources.
    \item We proposed an extension of the Williams--Beer axioms for measures of union information and showed our proposed measure satisfies them.
    \item We reviewed, commented on, and rejected some of the previously proposed properties for measures of union information and synergy in the literature.
    \item We showed that measures of union information that satisfy the extension of the Williams--Beer axioms necessarily satisfy a few other appealing properties, as well as the derived measures of synergy.
    \item We reviewed previous measures of union information and synergy, critiqued them, and compared them with our proposed measure.
    \item The proposed conditional independence measure is very simple to compute.
    \item We provide code for the computation of our measure for the bivariate case and for source $\{\{Y_1\}, \{Y_2\}, \{Y_3\} \}$ in the trivariate case.
\end{itemize}

Finally, we believe this paper opens several avenues for future research, thus we point out several directions to be pursued in upcoming work:

\begin{itemize}
    \item We saw that the synergy yielded by the measure of \citet{james2018unique} is given by $S^\text{dep} = I(Y;T) - \min \{ I_q(Y;T),  I_r(Y;T) \}$. Given its analytical expression, one could start by defining a measure of union information as $I_\cup(Y_1, Y_2 \rightarrow T) = \min \{ I_q(Y;T),  I_r(Y;T) \}$, possibly tweak it so it satisfies the WB axioms, study its properties, and possibly extend it to the multivariate case.
    \item Our proposed measure may ignore conditional dependencies that are present in $p$ in favor of maximizing mutual information, as we commented in section \ref{md}. This is a compromise so that the measure satisfies monotonicity. We believe this is a potential drawback of our measure, and we suggest the investigation of a measure similar to ours, but that does not ignore conditional dependencies that it has access to.
    \item Extending this measure for absolutely continuous random variables.
    \item Implementing our measure in the {\tt dit} package \citep{james2018dit}.
    \item This paper reviewed measures of union information and synergy, as well as properties that were suggested throughout the literature. Sometimes this was by providing examples where the suggested properties fail, and other times simply by commenting. We suggest performing something similar for measures of redundant information.
\end{itemize}

\section{Code availability}

The code is publicly available at \url{https://github.com/andrefcorreiagomes/CIsynergy/} and requires the {\tt dit} package \citep{james2018dit} (accessed on January 2024).

\textbf{Author Contributions:} Conceptualization, A.F.C.G. and M.A.T.F.; Software, A.F.C.G.; Validation, M.A.T.F.; Formal analysis, A.F.C.G.; Writing---original draft, A.F.C.G.; Writing---review \& editing, M.A.T.F.; Supervision, M.A.T.F. All authors have read and agreed to the published version of the manuscript.

\textbf{Funding:} This research was partially funded by: FCT -- \textit{Fundação para a Ciência e a Tecnologia}, under grants number SFRH/BD/145472/2019 and UIDB/50008/2020; Instituto de Telecomunicações; Portuguese Recovery and Resilience Plan, through project C645008882-00000055 (NextGenAI, CenterforResponsibleAI).

\section*{Appendix A} \label{apdix}

In this appendix, we present the remaining distributions for which we computed different measures of synergy. For these distributions each outcome has the same probability, so we do not present their probabilities.

\begin{table}[H] 
\caption{{RDNXOR} (left), XORLOSES (center), and XORMULTICOAL (right).\label{app1}}
\newcolumntype{C}{>{\centering\arraybackslash}X}
\begin{tabularx}{\textwidth}{CCC|CCCC|CCCC}
\toprule
\boldmath$T$ & \boldmath$Y_1$ & \boldmath$Y_2$ &\boldmath$T$ & \boldmath$Y_1$ & \boldmath$Y_2$ & \boldmath$Y_3$& \boldmath$T$ & \boldmath$Y_1$ & \boldmath$Y_2$ & \boldmath$Y_3$\\ 
    \hline
    0 & 0 & 0 & 0 & 0 & 0 & 0&0 & 0 & 0 & 0\\
    \hline
    1 & 0 & 1 &1 & 0 & 1 & 1&0 & 1 & 1 & 1\\
    \hline
    1 & 1 & 0&1 & 1 & 0 & 1& 0 & 2 & 2 & 2 \\
    \hline
    0 & 1 & 1&0 & 1 & 1 & 0&0 & 3 & 3 & 3 \\
    \hline
    2 & 2 & 2&&  &  &&1 & 2 & 1 & 0\\
    \hline
    3 & 2 & 3& &  &  & &1 & 3 & 0 & 1 \\
    \hline
    3 & 3 & 2 &&  &  &&1 & 0 & 3 & 2\\
    \hline
    2 & 3 & 3&&  &  &&1 & 1 & 2 & 3 \\
\hline
\end{tabularx}
\end{table}

\vspace{-12pt}

\begin{table}[H] 
\caption{{RDNUNQXOR}.\label{app2}}
\newcolumntype{C}{>{\centering\arraybackslash}X}
\begin{tabularx}{\textwidth}{CCC|CCC}
\toprule
\boldmath$T$ & \boldmath$Y_1$ & \boldmath$Y_2$ &\boldmath$T$ &\boldmath$Y_1$ & \boldmath$Y_2$ \\ 
    \midrule
    0 & 0 & 0 & 8 & 4 & 4 \\
    \midrule
    1 & 0 & 1 &9 & 4 & 5\\
    \midrule
    1 & 1 & 0 & 9 & 5 & 4\\
    \midrule
    0 & 1 & 1 &8 & 5 & 5\\
    \midrule
    2 & 0 & 2 &10 & 4 & 6\\
    \midrule
    3 & 0 & 3 &11 & 4 & 7\\
    \midrule
    3 & 1 & 2 &11 & 5 & 6\\
    \midrule
    2 & 1 & 3 &10 & 5 & 7\\
    \midrule
    4 & 2 & 0 &12 & 6 & 4\\
    \midrule
    5 & 2 & 1 &13 & 6 & 5\\
    \midrule
    5 & 3 & 0 &13 & 7 & 4\\
    \midrule
    4 & 3 & 1 &12 & 7 & 5\\
    \midrule
    6 & 2 & 2 &14 & 6 & 6\\
    \midrule
    7 & 2 & 3 &15 & 6 & 7\\
    \midrule
    7 & 3 & 2& 15 & 7 & 6\\
    \midrule
    6 & 3 & 3 &14 & 7 & 7\\
    \bottomrule
\end{tabularx}
\end{table}

\bibliographystyle{unsrtnat}
\bibliography{references}  






\end{document}